\renewcommand{\Re}{\mathbb R}
\newcommand{\supp}[1]{\mathrm{supp}\left(#1\right)}
\newcommand{\rank}{\mathrm{rank}}
\newcommand{\<}{\langle}
\renewcommand{\>}{\rangle}
\renewcommand{\vec}[1]{\mathrm{vec}\left[ #1 \right]}
\renewcommand{\mathbf}{\boldsymbol}
\newcommand{\event}{\mathcal{E}}
\renewcommand{\P}{\mathbb{P}}
\newcommand{\E}{\mathbb{E}}
\newcommand{\PO}{\mathcal{P}_\Omega}
\newcommand{\POc}{\mathcal{P}_{\Omega^c}}
\newcommand{\bb}{\mathbb}
\newcommand{\mb}{\mathbf}
\newcommand{\mc}{\mathcal}
\newcommand{\minimize}[2]{\text{minimize} \quad #1 \quad \text{subject to} \quad #2}
\newcommand{\prob}[1]{\P\left[ \, #1 \, \right] }
\newcommand{\condprob}[2]{\P\left[ \, #1 \mid #2 \, \right] }
\renewcommand{\dim}[1]{\mathrm{dim}(#1)}
\newcommand{\norm}[2]{\left\| #1 \right\|_{#2}}
\newcommand{\expect}[1]{\E\left[ #1 \right]}
\newcommand{\condexp}[2]{\E\left[ #1 \mid #2 \right]}
\newcommand{\innerprod}[2]{\left\< #1, #2 \right\>}
\newcommand{\magnitude}[1]{\left|#1\right|}
\newcommand{\sign}[1]{\mathrm{sign}\left(#1\right)}
\newcommand{\set}[1]{\left\{ #1 \right\}}
\renewcommand{\rank}[1]{\mathrm{rank}\left(#1\right)}
\newcommand{\PQp}{\mc P_{\Qp}}
\newcommand{\PT}{\mc P_T}
\newcommand{\PQ}{\mc P_Q}
\newcommand{\PTO}{\mc P_{T \oplus \Omega}}
\newcommand{\PTp}{\mc P_{T^\perp}}
\newcommand{\PS}{\mc P_S}
\newcommand{\PSp}{\mc P_{S^\perp}}
\newcommand{\Qp}{Q^\perp}
\newcommand{\Wh}{\hat{\mb \Lambda}}
\newcommand{\contract}[1]{\left( \mc I - \frac{mn}{\gamma} \mc A_{#1} \right)}
\newcommand{\partop}[1]{\frac{mn}{\gamma} \mc A_{#1}}
\newcommand{\cert}[1]{\mb \Lambda_{#1}}
\newcommand{\err}[1]{\mb E_{#1}}
\newcommand{\UVt}{\mb U \mb V^*}
\newcommand{\expfrac}[2]{\exp\left(-\frac{#1}{#2}\right)}
\newcommand{\Lh}{\hat{\mb \Lambda}}
\newcommand{\PTi}{\mc P_{T_i}}
\newcommand{\PTip}{\mc P_{T_i^\perp}}
\newcommand{\bernoulli}[1]{\mathrm{Ber}(#1)}
\newcommand{\normal}[2]{\mc N \left( #1 , #2 \right) }
\newcommand{\eqdist}{\equiv_d}
\newcommand{\skewvariance}{\sum_{l=1}^\gamma \innerprod{ \mb H'_l }{\PS \mb M }^2}
\newcommand{\skewstdev}{\left( \skewvariance \right)^{1/2}}
\newcommand{\Lpcp}{\mb \Lambda_{\mathrm{PCP}}}
\newcommand{\floor}[1]{\left\lfloor #1 \right\rfloor}
\newcommand{\ceiling}[1]{\left\lceil #1 \right\rceil}
\newtheorem{theorem}{Theorem}[section]
\newtheorem{lemma}[theorem]{Lemma}
\newtheorem{proposition}[theorem]{Proposition}
\newtheorem{remark}[theorem]{Remark}
\newtheorem{definition}[theorem]{Definition}
\numberwithin{equation}{section}
\begin{document}

\title{Compressive Principal Component Pursuit}
\author{John Wright$^1$, Arvind Ganesh$^2$, Kerui Min$^2$, and Yi Ma$^{2,3}$ \vspace{3mm} \\
$^1$ Electrical Engineering Department, Columbia University, New York  \vspace{2mm}\\
$^2$ Electrical and Computer Engineering Department, UIUC, Urbana \vspace{2mm}\\
$^3$ Microsoft Research Asia, Beijing, China
}
\maketitle

\begin{abstract} We consider the problem of recovering a target matrix that is a superposition of low-rank and sparse components, from a small set of linear measurements. This problem arises in compressed sensing of structured high-dimensional signals such as videos and hyperspectral images, as well as in the analysis of transformation invariant low-rank recovery. We analyze the performance of the natural convex heuristic for solving this problem, under the assumption that measurements are chosen uniformly at random. We prove that this heuristic exactly recovers low-rank and sparse terms, provided the number of observations exceeds the number of intrinsic degrees of freedom of the component signals by a polylogarithmic factor. Our analysis introduces several ideas that may be of independent interest for the more general problem of compressed sensing and decomposing superpositions of multiple structured signals. 
\end{abstract}

%\ym{About author order, I prefer one of the following two choices: 1. John being the first author as he has most technical contribution to this approach; 2. Following the math convention using alphabetic order. The second choice puts John as the last, which also, in engineering convention, means ``project lead.'' }

\section{Introduction}
In recent years, there has been tremendous interest in recovering low-dimensional structure in high-dimensional signal or data spaces. This interest has been fueled by the discovery that efficient techniques based on convex programming can accurately recover low-complexity signals such as sparse vectors or low-rank matrices from severely compressive, incomplete, or even corrupted observations. 

One representative example arises in {\em Robust Principal Component Analysis (RPCA)}. There, the goal is to recover a low-rank matrix $\mb L_0$ from grossly corrupted observations. For example, suppose we observe $\mb M = \mb L_0 + \mb S_0$, where $\mb S_0$ is a sparse error. Under mild conditions, the following convex program, called {\em Principal Component Pursuit} (PCP) \cite{Candes2011-JACM,Chandrasekharan2011-SJO}:
\begin{equation}
\minimize{\| \mb L\|_* + \lambda \| \mb S \|_1}{\mb L + \mb S = \mb M},
\label{eqn:pcp}
\end{equation}
precisely recovers $\mb L_0$ and $\mb S_0$. In \eqref{eqn:pcp}, $\| \cdot \|_*$ is the matrix {\em nuclear norm} (sum of singular values) and $\| \cdot \|_1$ is the $\ell^1$ norm (sum of magnitudes). For data analysis applications, this suggests that a low-rank matrix $\mb L_0$ can be recovered from the observation $\mb M$ despite large-magnitude sparse errors. This result has been extended and generalized in a number of directions: to include additional small dense noise $\mb M = \mb L_0 + \mb S_0 + \mb N$ \cite{Zhou2010-ISIT}, large fractions of random errors $\mb S_0$ \cite{Ganesh2010-ISIT}, and even column-sparse or row-sparse errors \cite{Xu2011-IT,McCoy2011-EJS}. 

The conditions under which recovery is known to occur are fairly broad: provided the low-rank term satisfies a technical {\em incoherence} condition, correct recovery can occur even when $\mathrm{rank}(\mb L_0)$ almost proportional to dimension of the matrix $\mb M$, and the number of nonzero entries in $\mb S_0$ is proportional to the number of entries in $\mb M$ \cite{Candes2011-JACM}. On the other hand, in many applications of interest, the rank may actually be significantly smaller than dimension (say 3 \cite{Wu2010-ACCV}, or 9 \cite{Basri2003-PAMI}). Moreover, cardinality of the sparse term may also be quite small. In such a situation our number $mn$ of observations could be extravagantly large compared to the number degrees of freedom in the unknowns $\mb L_0,\mb S_0$. Is it possible to recover $\mb L_0$ and $\mb S_0$ from smaller sets of linear measurements? 

%Furthermore, results in \cite{Candes2011-JACM} have shown that more surprisingly, the low-rank and sparse components can be recovered even when an arbitrarily large fraction of entries of $\mb M$ are missing. Suppose $p$ out of the $mn$ entries of $\mb M$ are missing. Let $\mathcal{P}_\Omega: \R^{m\times n} \rightarrow \R^q$  be the projection of the matrix $\mb M$ onto the subspace spanned by its $q = mn - p$ available entries, and let $\mb D = \mathcal{P}_\Omega(\mb M)$ be the observed entries. Then, one can recover the low-rank matrix $\mb L_0$ from the incomplete data $\mb D$ via the following convex program:
%\begin{equation}
%\min_{\mb L,\mb S} \, \|\mb L\|_* + \lambda \|\mb S\|_1 \quad \mathrm{s.t.} \quad \mb D = \mathcal{\mb P}_\Omega(\mb L + \mb S).
%\label{eqn:ipcp}
%\end{equation}
%Notice that this convex program solves the (low-rank) matrix recovery problem and matrix completion problem \cite{Candes2008} in a unified fashion.

\subsection{Compressive RPCA}

The low-rank and sparse model described above captures properties of many signals of interest, including foreground and background in video surveillance \cite{Candes2011-JACM}, videos \cite{Ganesh2011-PAMI,Shu2011-ICCV}, structured textures \cite{Zhang2011-IJCV}, hyperspectral datacubes \cite{Waters2011-NIPS,Golbabaee2011-ICASSP} and more. The ability to recover low-rank and sparse models from small sets of linear measurements could be very useful for developing new sensing architectures for such signals \cite{Donoho2006-IT,Waters2011-NIPS}. Mathematically, our observations have the form 
\begin{equation}
\mb D \;\doteq\; \mc P_Q[ \mb M ] \;=\; \PQ[ \mb L_0 + \mb S_0 ],
\end{equation}
where $Q \subseteq \Re^{m \times n}$ is a linear subspace, and $\PQ$ denotes the projection operator onto that subspace. Can we simultaneously recover the low-rank and sparse components correctly from highly compressive measurements via the natural convex program
\begin{equation} \label{eqn:cpcp-intro}
\minimize{\|\mb L\|_* + \lambda \|\mb S\|_1}{\PQ[\mb L + \mb S] = \mb D} \;?
\end{equation}
While this question is largely open, there is good reason to believe the answer may be positive. For example, \cite{Candes2011-JACM,Li2011-pp} have studied the ``robust matrix completion'' problem, with $\PQ = \mc P_{\Omega}$, where $\Omega$ is a small subset of the entries of the matrix. When $\PQ = \mc P_{\Omega}$, it is impossible to exactly recover $\mb S_0$ (many of the entries are simply not observed!), but the low-rank term $\mb L_0$ {\em can} be recovered from near-minimal sets of random samples \cite{Li2011-pp}. However, in many applications the sparse term $\mb S_0$ is actually the quantity of interest: for example, in visual surveillance, $\mb S_0$ might capture moving foreground objects. To recover both $\mb L_0$ and $\mb S_0$, we must require measurements $Q$ that are incoherent with {\em both} the low-rank and the sparse term.

\begin{figure*}[ht]
\centerline{
    \subfigure[$p = 0$]{
        \includegraphics[height=0.26\columnwidth]{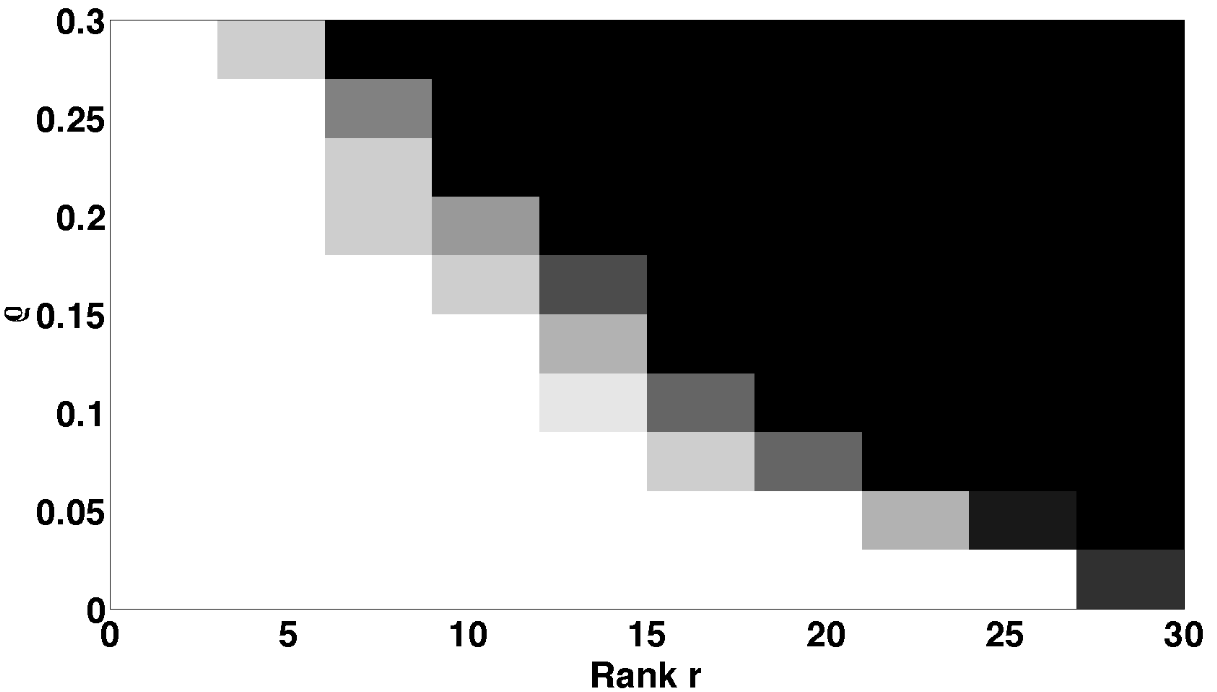}
    }
    \subfigure[$p = 500$]{
        \includegraphics[height=0.26\columnwidth]{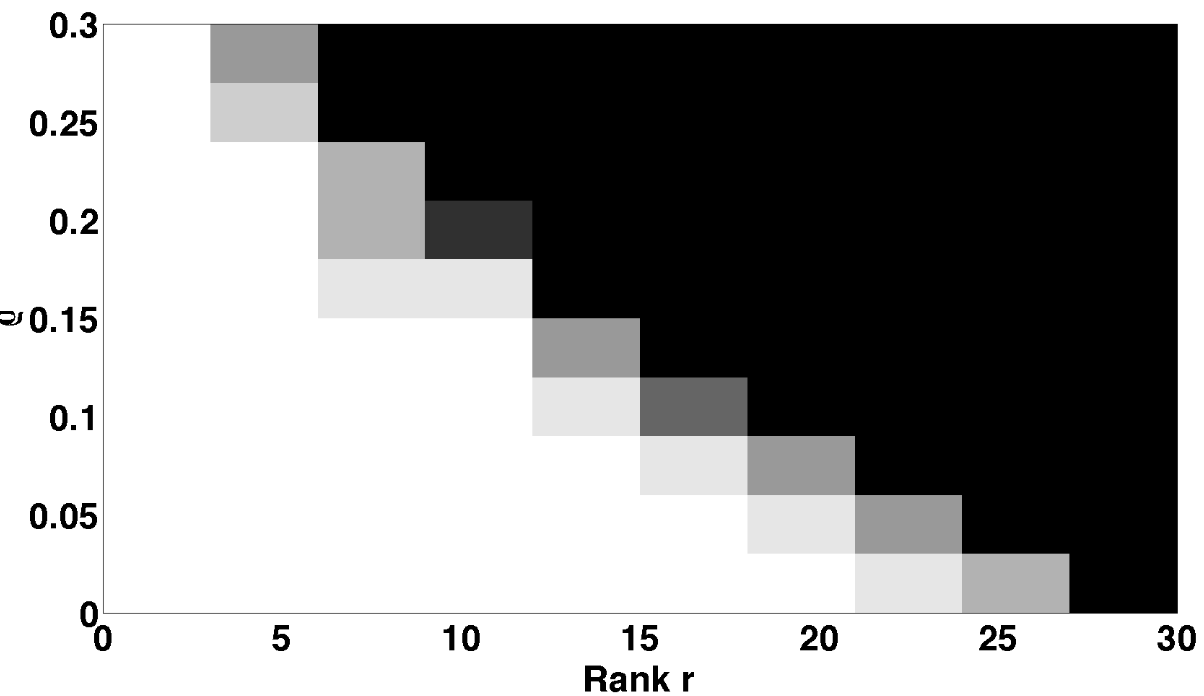}
    }
}    
\centerline{    
    \subfigure[$p = 2,000$]{
        \includegraphics[height=0.26\columnwidth]{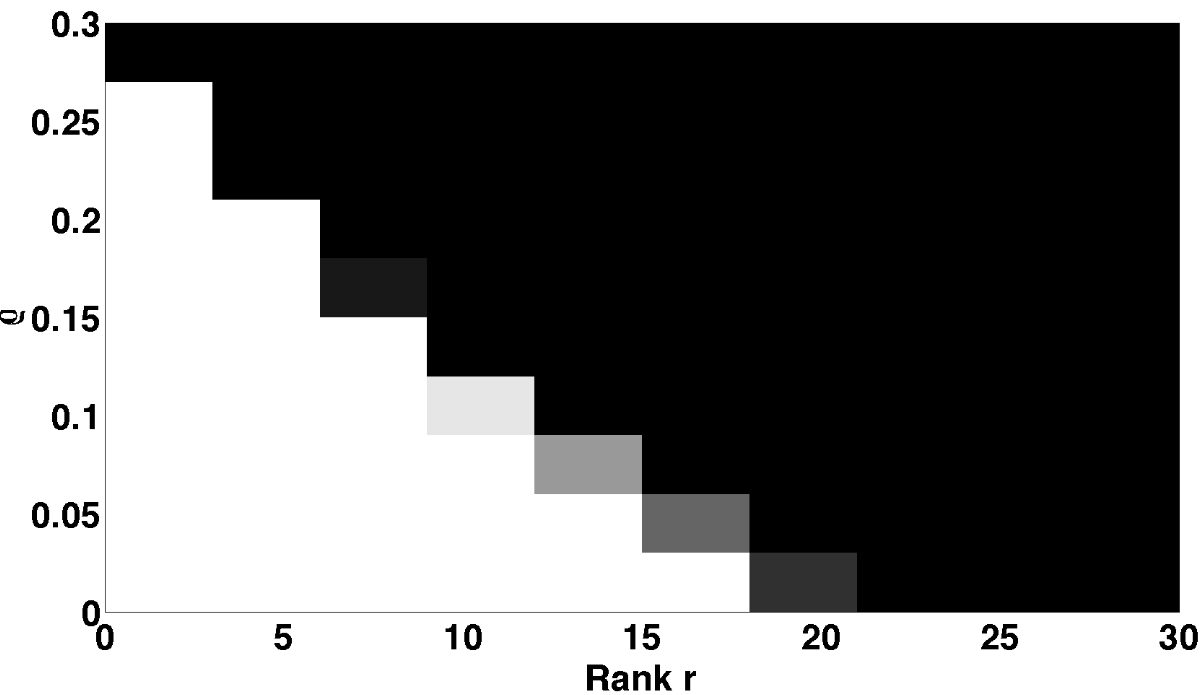}
    }
    \subfigure[$p = 5,000$]{
        \includegraphics[height=0.26\columnwidth]{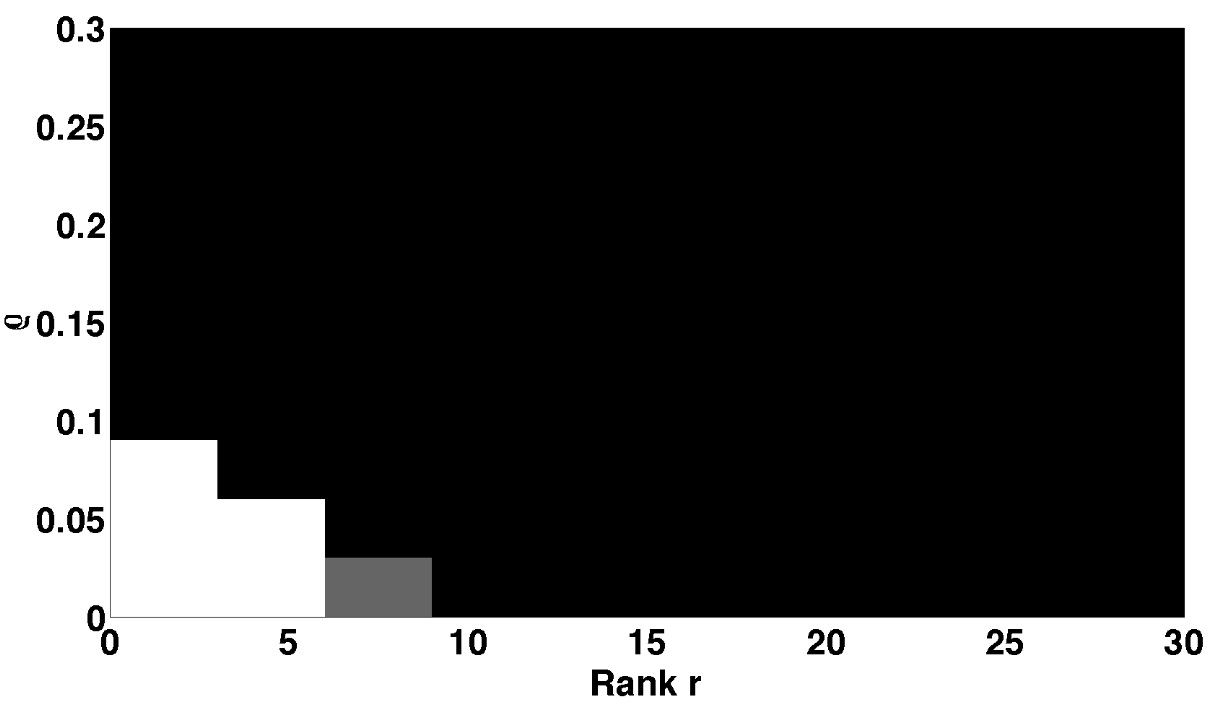}
    }
}
%\centerline{
%    \subfigure[$p = 2000$]{
%        \includegraphics[height=0.16\columnwidth]{images/p2000_n100.png}
%    }
%    \subfigure[$p = 3000$]{
%        \includegraphics[height=0.16\columnwidth]{images/p3000_n100.png}
%    }\hspace{4mm}
%    \subfigure[$p = 4000$]{
%        \includegraphics[height=0.16\columnwidth]{images/p4000_n100.png}
%    }
%    \subfigure[$p = 5000$]{
%        \includegraphics[height=0.16\columnwidth]{images/p5000_n100.png}
%    }
%}
\caption{{\bf Compressive sensing of low-rank and sparse matrices via convex program \eqref{eqn:cpcp-intro}.} We solve the convex program for recovering an $m\times m$ matrix $\mb M  = \mb L_0 + \mb S_0$ with $m = 100$ from $q = m^2 - p$ random linear measurements $\PQ[\mb M]$. For each subplot: the $x$-axis is the rank $r$ of the matrix $\mb L_0$ and the $y$-axis is the percentage of non-zero entries in $\mb S_0$. The intensity is proportional to the probability of success with pure white color meaning 100\% (out of 10 random trials). Notice that when $p = 5,000$, the number of linear measurements is only half of the number of entries and there remains a small region where the convex program succeeds. These simulations use an accelerated gradient algorithm with continuation, similar to \cite{Becker2010-pp}.}
\label{fig:simulations}
\end{figure*}

In this paper, we investigate the performance of \eqref{eqn:cpcp-intro} when $Q$ is a randomly chosen subspace, (incoherent with $\mb L_0$ and $\mb S_0$ with high probability). As the simulation results in Figure \ref{fig:simulations} suggest, as long as the rank and sparsity are low enough, we can expect the convex program to correctly recover both the low-rank and sparse components from a reduced set of random linear measurements. A similar recovery problem was recently considered by \cite{Waters2011-NIPS}, again, with the goal of designing sensing strategies capable of recovering both $\mb L_0$ and $\mb S_0$. We will discuss the results of \cite{Waters2011-NIPS} and other related works in more detail in Section \ref{sec:relation}, after we have stated our main result.

\subsection{Transformed RPCA}

Aside from the perspective of compressive sensing, there are many other practical scenarios that require recovering a low-rank matrix from partial, incomplete, or corrupted measurements. One example is when the given data is a transformed version of the low-rank and sparse matrices: 
\begin{equation} \label{eqn:txfm}
\mb M \circ \tau = \mb L_0 + \mb S_0,
\end{equation}
 where $\tau$ is an unknown nonlinear transformation from some continuous group $\mathcal G$. The goal is to simultaneously recover $\mb L_0, \mb S_0$ and $\tau$ from $\mb M$. One can view this as a ``transformed RPCA'' problem. The constraint \eqref{eqn:txfm} is often highly nonlinear. One popular approach is to linearize the measurements against parameters of the transformation:
$$ \mb M \circ \tau + \mc J  [ \Delta \tau ] \approx \mb L_0 + \mb S_0,$$
where $\mc J$ is the Jacobian of $\mb M\circ \tau$ against of $\tau$. We can then solve for an increment $\Delta \tau = \tau_{k+1} - \tau_k$ in the transformation parameters via the convex program:
\begin{equation}
\text{minimize}_{\mb L,\mb S,\Delta \tau} \,\; \|\mb L\|_* + \lambda \|\mb S\|_1 \quad \text{subject to} \quad \mb M \circ \tau_k + \mc J[ \Delta \tau ] = \mb L + \mb S.
\end{equation}
Mathematically, this program is equivalent to \eqref{eqn:cpcp-intro}. To see this, let $Q$ be the orthogonal complement to the range of $\mc J$, so that $\PQ \mc J = 0$. Let $$\mb D  \;\doteq\; \PQ [\mb M \circ \tau+ \mc J [\Delta \tau] ] \;=\; \PQ [\mb M \circ \tau] \;\approx\; \PQ [\mb L_0 + \mb S_0].$$ After $\Delta \tau$ is eliminated in this way, the problem now becomes recovering the low-rank and sparse components from $\mb D$: 
\begin{equation}
\text{minimize}_{\mb L,\mb S} \,\; \|\mb L\|_* + \lambda \|\mb S\|_1 \quad \text{subject to} \quad \mb D = \PQ[\mb L + \mb S],
\end{equation}
which has the same form as above. 

Empirically, this iterative linearization scheme performs well in  applications such as aligning multiple images \cite{Ganesh2011-PAMI} and rectifying low-rank textures \cite{Zhang2011-IJCV}. Hence, it is important to understand under what conditions we should expect the associated convex program to perform correctly. However, there are some important differences from the compressive sensing scenario:
\begin{enumerate}
\item In the transformed RPCA case, we often are dealing with a finite dimensional deformation group $\mathcal G$ whose dimension, say $p$, is either fixed (as in \cite{Zhang2011-IJCV}) or grows very slowly compared to the number of entries in the matrix (as in \cite{Ganesh2011-PAMI}). 
\item Unlike compressive sensing where the measurement operator $\PQ(\cdot)$ can be arbitrarily chosen, here it is determined by the given data and the associated transformation group. We can no longer model it as a random projection. Hence, we hope to have deterministic conditions which can be directly verified with the given data.
\end{enumerate}

\subsection{Compressive Sensing of Decomposable Components}
From both the compressive and transformed RPCA problems, we see the need to understand under what conditions we should expect to correctly recover the low-rank and sparse components from compressive or partial measurements: $\mb D = \PQ[\mb L_0 + \mb S_0]$. In particular, we are interested in when the convex program:
\begin{equation} \label{eqn:cpcp}
\text{minimize} \,\; \|\mb L\|_* + \lambda \|\mb S\|_1 \quad \text{subject to} \quad \mb D = \PQ[\mb L + \mb S],
\end{equation}
finds the correct solution $\mb L_0$ and $\mb S_0$. Following the terminology of \cite{Candes2011-JACM}, in this paper we refer to this convex program as {\em Compressive Principal Component Pursuit} (CPCP). 

One fundamental question is how many measurements $q$ are needed for the above program \eqref{eqn:cpcp} to correctly recover $\mb L_0$ and $\mb S_0$.  Clearly, this number should be bounded from below by the number of intrinsic degrees of freedom in $(\mb L_0,\mb S_0)$. Since a rank $r$ matrix has $(m+n-r)r$ degrees of freedom, the number of continuous degrees of freedom in the pair $(\mb L_0,\mb S_0)$ is equal to 
$$(m + n - r) r + \norm{\mb S_0}{0},$$
where we recall that $\norm{ \,\cdot\, }{0}$ denotes the number of nonzero entries in a matrix. Hence, the best we can possibly hope for is a number of measurements $q$ on this order. We will show that when the measurements are random (say Gaussian), the desired $(\mb L_0,\mb S_0)$ can indeed be exactly recovered from a number of measurements that is very close to this lower bound: provided 
$$\mathtt{\# measurements} \;\ge\; O(\log^2 m) \times \mathtt{\# degrees\; of\; freedom}( \mb L_0, \mb S_0  ), $$ 
the compressive principal component pursuit program  \eqref{eqn:cpcp} correctly recovers this pair with very high probability. Notice that this bound is nearly optimal, differing from the hard lower bound by only a polylogarithmic factor. 

Our analysis actually pertains to a much more general class of problems of decomposing a given observation into multiple incoherent components:
\begin{equation} \label{eqn:general-inco}
\minimize{\sum_i \lambda_i \norm{\mb X_i}{(i)}}{ \sum_i \mb X_i = \mb M }.
\end{equation}
Here, $\| \cdot\|_{(i)}$ are (decomposable) norms that encourage various types of low-complexity structure. Principal Component Pursuit \cite{Candes2011-JACM,Chandrasekharan2011-SJO}, Outlier Pursuit \cite{Xu2011-IT,McCoy2011-EJS} and Morphological Component Analysis \cite{Bobin2006-TSP} are all special cases of this general problem. Roughly speaking, our analysis will suggest that, if the above program succeeds in recovering all the components $\{\mb X_i\}$ from $\mb M$, one should also expect to recover them from the highly compressive measurements $\PQ[\mb M]$. The number of measurements required is again governed by the intrinsic degrees of freedom $\{\mb X_i \}$ multiplying at most a $\mbox{polylog}(m)$ factor. Thus, we believe results in this paper are not only limited to decomposing low-rank and sparse signals but also applicable to a broad class of source separation or signal decomposition problems that may arise in signal processing, communications, and pattern recognition.

The remainder of this paper is organized as follows. In Section \ref{sec:model-result}, we first introduce the precise mathematical model and present the main technical results of this paper. In Section \ref{sec:relation}, we discuss its implications and relationships with existing work in the literature. Section \ref{sec:upgrade} discusses the more general setting of \eqref{eqn:general-inco} and lays out the framework of our analysis. The remaining sections complete the proof of our main results.

\section{Models and Main Results}\label{sec:model-result}

Our main technical contribution is a procedure for producing a certificate of optimality for $(\mb L_0,\mb S_0)$ for the Compressive Principal Component Pursuit problem, given that the pair is optimal for Principal Component Pursuit. In this sense, our mathematical approach is modular -- it partially decouples the analysis of the the compressive measurements from the analysis of the core low-rank and sparse recovery problem. Combining with existing models and analyses of PCP, we can prove that the pair $(\mb L_0,\mb S_0)$ is indeed recoverable by the convex optimization. 

We first recall conditions under which $\mb M = \mb L_0 + \mb S_0$ can be exactly separated into its constituents, by PCP. Intuitively, we should not expect to recover all possible low-rank pairs and sparse pairs $(\mb L_0,\mb S_0)$. Indeed, imagine the case when $\mb M$ is rank-one and one-sparse (i.e., $\mb M = \mb e_i \mb e_j^*$ for some $i,j$). In this situation the answers $(\mb L = \mb e_i\mb e_j^*,\mb S =\mb 0)$ and $(\mb L = \mb 0,\mb S = \mb e_i\mb e_j^*)$ both seem reasonable -- the decomposition problem is ambiguous! 

To make the problem meaningful, we need conditions that ensure that (i) the low-rank term $\mb L_0$ does not ``look sparse'' and (ii) the sparse term $\mb S_0$ does not ``look low-rank.'' One popular way formalizing the first intuition of doing this is via the notion of {\em incoherence} introduced by \cite{Candes2008}. If the low-rank matrix $\mb L_0$ has rank-reduced singular value decomposition $\mb L_0 = \mb U \mb \Sigma \mb V^*$, then we say that $\mb L_0$ is $\mu$-incoherent if 
\begin{eqnarray}
\forall \, i \; \norm{\mb U^* \mb e_i}{2}^2 \;\le\;\frac{\mu r}{m}, \quad \forall \, j  \; \norm{\mb V^* \mb e_j}{2}^2 \;\le\;\frac{\mu r}{n}, \quad \text{and} \quad \norm{\UVt}{\infty} \;\le\; \sqrt{\frac{\mu r}{mn}}.
\end{eqnarray}
Intuitively, these conditions ensure that the singular vectors of $\mb L_0$ are not too concentrated on only a few coordinates -- the singular vectors do not ``look sparse.'' For further discussion of the implications of this condition, we refer the reader to \cite{Candes2008}.

At the same time, we need to ensure that the sparse term does not ``look low-rank.'' One appealing way of doing this is via a random model: we assume that each $(i,j)$ is an element of $\supp{\mb S_0}$ independently with probability $\rho$ bounded by some small constant. We assume that the signs of the nonzero entries are independent symmetric $\pm 1$ random variables (i.e., Rademacher random variables). In stating our theorems, we call such a distribution an ``iid Bernoulli-Rademacher model.'' 

Thus far, we have discussed only the low-rank and sparse terms, but not the properties of the measurements $Q$. We will give a result for the case when $Q$ is a chosen uniformly at random from the set of all $q$-dimensional subspaces of $\Re^{m \times n}$. More precisely, $Q$ is distributed according to the Haar measure on the Grassmannian $\bb G(\Re^{m \times n},q)$. This means that the distribution of $Q$ is rotationally invariant. On a more intuitive level, this means that $Q$ is equal in distribution to the linear span of a collection of $q$ independent iid $\normal{0}{1}$ matrices. In notation more familiar from compressed sensing, we may let $\mb Q_1, \dots, \mb Q_q$ denote such a set of matrices, and define an operator $\mc Q : \Re^{m \times n} \to \Re^q$ via 
\begin{equation}
\mc Q[ \mb M]\; = \; \left(\innerprod{\mb Q_1}{\mb M}, \dots, \innerprod{\mb Q_q}{\mb M}\right)^* \;\in\; \Re^q.
\end{equation}
Our analysis also pertains to the equivalent convex program: 
\begin{equation} \label{eqn:proj-form}
\minimize{\norm{\mb L}{*} + \lambda \norm{\mb S}{1}}{\mc Q[ \mb L + \mb S ] = \mc Q[ \mb L_0 + \mb S_0 ].}
\end{equation}
Indeed, since $\mc Q$ has full rank $q$ almost surely, \eqref{eqn:proj-form} and \eqref{eqn:cpcp} are completely equivalent. 

Under this setting, the following theorem gives a tight bound on the number of (random) measurements required to correctly recover the pair $(\mb L_0, \mb S_0)$ from $\PQ[\mb M]$ via CPCP:
\begin{theorem}[\bf Compressive PCP Recovery] \label{thm:main} Let $\mb L_0,\mb S_0 \in \Re^{m \times n}$, with $m \ge n$, and suppose that $\mb L_0 \ne \mb 0$ is a rank-$r$, $\mu$-incoherent matrix with
\begin{equation}
r \;\le\; \frac{c_r n}{ \mu \log^2 m },
\end{equation}
and $\sign{\mb S_0}$ is iid Bernoulli-Rademacher with nonzero probability $\rho < c_\rho$. Let $Q \subset \Re^{m \times n}$ be a random subspace of dimension 
\begin{equation}
\dim{Q} \;\ge\; C_Q \cdot (  \rho m n + m r ) \cdot \log^2 m
\end{equation}
distributed according to the Haar measure, probabilistically independent of $\mathrm{sign}(\mb S_0)$. Then  with probability at least $1 - C m^{-9}$ in $(\mathrm{sign}(\mb S_0),Q)$, the solution to 
\begin{equation}
\minimize{\norm{\mb L}{*} + \lambda \norm{\mb S}{1}}{\PQ [ \mb L + \mb S ] =  \PQ[\mb L_0 + \mb S_0]}
\end{equation}
with $\lambda = 1/ \sqrt{m}$ is unique, and equal to $(\mb L_0,\mb S_0)$. Above, $c_r,c_\rho,C_Q,C$ are positive numerical constants. 
\end{theorem}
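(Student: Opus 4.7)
The plan is to prove Theorem \ref{thm:main} by the standard dual-certificate approach adapted to the compressive setting. The KKT conditions for \eqref{eqn:proj-form} require producing a matrix $\mb \Lambda \in Q$ satisfying the nuclear-norm subgradient conditions $\PT \mb \Lambda = \UVt$ and $\|\PTp \mb \Lambda\| < 1$, together with the $\ell^1$ subgradient conditions $\PO \mb \Lambda = \lambda \sign{\mb S_0}$ and $\|\POc \mb \Lambda\|_\infty < \lambda$. Combined with an injectivity condition of the form $(T + \Omega) \cap Q^\perp = \{\mb 0\}$, with quantitative margin, these will suffice to guarantee that $(\mb L_0, \mb S_0)$ is the unique optimizer.

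Following the modular strategy flagged in the text, I would first invoke existing PCP analyses (which apply under the stated incoherence and sparsity assumptions) to obtain a certificate $\mb W_{PCP}$ for the uncompressed PCP problem with strictly positive slack in its inequalities. The CPCP certificate is then constructed as $\mb \Lambda = \mb W_{PCP} + \mb \Delta$, where the correction $\mb \Delta$ is chosen to lie in $T^\perp \cap \Omega^\perp$ (so the equality conditions on $T$ and $\Omega$ are automatically preserved) and to satisfy $\PQp \mb \Delta = -\PQp \mb W_{PCP}$ (so that $\mb \Lambda \in Q$). The natural choice is the minimum-Frobenius-norm solution, obtained by a pseudo-inverse on $T^\perp \cap \Omega^\perp$ applied to $-\PQp \mb W_{PCP}$; this is well-defined as soon as $\PQp$ restricted to $T^\perp \cap \Omega^\perp$ is appropriately surjective. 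The construction uses $\mb W_{PCP}$ only through its image under $\PQp \PTpIOp$, which is exactly what gives the ``decoupling'' between the PCP analysis and the compressive analysis.

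To show $\mb \Lambda$ meets the strict inequalities, I would bound $\|\mb \Delta\|$ and $\|\mb \Delta\|_\infty$ using concentration of the random projection $\PQ$. Because $Q$ is Haar-distributed on $\bb G(\Re^{m \times n}, q)$, standard random-projection lemmas yield $\PT \PQ \PT \approx (q/mn) \PT$ on the low-dimensional subspace $T$, and similarly $\PO \PQ \PO \approx (q/mn) \PO$ on $\Omega$, together with control of cross-terms like $\PT \PQ \POc$. The hypothesis $\dim Q \ge C_Q (\rho m n + m r) \log^2 m$ is precisely calibrated so that these bounds hold with failure probability $m^{-10}$. A golfing-style iterative refinement — partitioning $Q$ into $O(\log m)$ independent random blocks and at each step reducing the projection of the residual onto $T \cup \Omega$ by a constant factor — then gives sharp control on both norms of $\mb \Delta$, after which verifying $\|\PTp \mb \Lambda\| < 1$ and $\|\POc \mb \Lambda\|_\infty < \lambda$ is a direct triangle-inequality computation against the slack in $\mb W_{PCP}$.

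The main obstacle is the sharp $\ell^\infty$ control of $\mb \Delta$ needed to keep $\|\POc \mb \Lambda\|_\infty$ below the tight threshold $\lambda = 1/\sqrt m$: only a polylogarithmic slack is available, so naive Frobenius-to-entrywise bounds (which lose $\sqrt{mn}$) are inadequate. The key device is to decouple the two sources of randomness in $\mb \Delta$ by conditioning on $Q$ and exploiting the independent Rademacher signs of $\mb S_0$, then applying a Hoeffding-style estimate at each of the $mn$ coordinates combined with a union bound; this is where probabilistic independence of $Q$ and $\sign{\mb S_0}$ is essential. The injectivity condition $(T + \Omega) \cap Q^\perp = \{\mb 0\}$ is a secondary but necessary step, following from a minimum-singular-value bound on the restriction of $\PQ$ to the subspace $T + \Omega$ of dimension $O(mr + \rho mn)$, which is controlled by the same hypothesis on $\dim Q$.
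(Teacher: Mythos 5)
Your high-level architecture matches the paper's: start from a PCP dual certificate, correct it so that it lies in $Q$, with a golfing scheme over $O(\log m)$ independent blocks of the Haar-random subspace, plus a separate injectivity bound on $\PQp$ restricted to $T+\Omega$ (the paper's paragraph (i), via Lemma \ref{lem:op-appx}). But there is a genuine gap at exactly the point you flag as ``the main obstacle'': the mechanism you propose for controlling $\|\mb \Delta\|$ and $\|\mb \Delta\|_\infty$ does not work, and it is not the one the paper uses. The correction $\mb \Delta$ must satisfy $\|\PTp(\mb W_{\mathrm{PCP}}+\mb\Delta)\| < 1$ and $\|\POc(\mb W_{\mathrm{PCP}}+\mb\Delta)\|_\infty < \lambda = 1/\sqrt m$, while its Frobenius norm is necessarily of order $\|\mb W_{\mathrm{PCP}}\|_F \sim \sqrt{r} + \lambda\sqrt{\|\mb S_0\|_0}$, so naive norm comparisons fail by polynomial factors. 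Your fix --- condition on $Q$ and apply Hoeffding over the Rademacher signs of $\mb S_0$ --- cannot close this. First, the dominant contributions to $\PQp \mb W_{\mathrm{PCP}}$, namely $\mb U\mb V^*$ and $\mb W^L$, do not depend on the signs of $\mb S_0$ at all, so sign randomness gives you nothing for them. Second, a Hoeffding bound is coordinatewise and cannot deliver the \emph{operator-norm} control needed for the nuclear-norm dual constraint $\|\PTp\mb\Lambda\| < 1$. Third, $\mb\Delta$ depends on $\sign{\mb S_0}$ through the pseudo-inverse of a $\Omega$- and $Q$-dependent operator applied to a sign-dependent input, so the independence structure a Hoeffding argument needs is not actually present.

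The paper's key device, which is absent from your proposal, is Lemma \ref{lem:skew-infty}: because $Q$ is spanned by iid Gaussian matrices $\mb H_l$, the components $\PS \mb H_l$ and $\PSp \mb H_l$ are probabilistically independent, so each golfing correction $\PSp \partop{j}\PS \err{j-1}$ is, conditionally, a Gaussian series whose dual norm is controlled by Slepian's inequality in terms of $\nu = \E\|\mb G\|^*$ for a standard Gaussian matrix $\mb G$. This yields a bound of order $(\nu+\sqrt{\log m})\,\|\err{j-1}\|_F/\sqrt{\gamma}$ simultaneously for the operator norm ($\nu\sim\sqrt m$) and the $\ell^\infty$ norm ($\nu\sim\sqrt{\log m}$), which is precisely what makes the $\lambda=1/\sqrt m$ threshold attainable with only $O(\log^2 m)$ oversampling. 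Relatedly, the paper does not attempt an exact certificate with corrections confined to $T^\perp\cap\Omega^\perp$ as you do; it produces an \emph{inexact} certificate (the golfing corrections perturb the values on $T$ and $\Omega$ by $O(m^{-3}\|\hat{\mb\Lambda}\|_F)$) and then closes the gap with a deterministic least-squares perturbation argument (Lemma \ref{lem:sc-relaxed}), which requires the bound $1-\|\PQp\mc P_{T+\Omega}\|^2 \ge 1/m$ from paragraph (i). Your exact-certificate route would additionally require dual-norm control of the exactifying correction inside $T^\perp\cap\Omega^\perp$, which your outline also does not supply.
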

\vspace{.25in}

Here, the magnitudes of the nonzeros in $\mb S_0$ are arbitrary, and no randomness is assumed in $\mb L_0$. The randomness in our main result is in the sign and support pattern of $\mb S_0$ and the measurements $Q$. We note in passing that the randomness in the signs of $\mb S_0$ can be removed using the techniques of \cite{Candes2011-JACM} Sections 2.1-2.2. We will not pursue this in great depth here. 

We also note that the bounds on $r$ and $\rho$ essentially match those of \cite{Candes2011-JACM}, possibly with different constants. So, again, $r$ and $\| \mb S_0 \|_0$ can be rather large. On the other hand, when these quantities are small, the bound on $\dim{Q}$ ensures that the number of measurements needed for accurate recovery is also commensurately small. We will compare our results to other works from the literature in the next section. First, we pause to fix some notation.

\paragraph{Notation.} Bold uppercase letters $\mb A,\mb B,\dots$ denote matrices. Bold lowercase letters $\mb x,\mb y$ denote vectors. Script uppercase letters $\mc A,\mc B,\dots$ denote operators on matrices. In particular, if $S \subset \Re^{m \times n}$ is a linear subspace, we will let $\PS$ denote the orthogonal projection onto $S$. The notations $C,c$ will always refer to numerical constants. When used in different sections they may not refer to the same constant. All logarithms are base-$e$. ``$\oplus$'' denotes a direct sum between linearly independent subspaces. When applied to subsets of a vector space, ``+'' will denote Minkowski summation, i.e., $A+B = \set{\mb a + \mb b \mid \mb a \in A,\;\mb b \in B }$. 
\begin{definition}
We will say that subspaces $S_1, \dots, S_k$ are {\bf \em independent} if $$\dim{ S_1 + \dots + S_k } \;=\; \dim{ S_1 } + \dots + \dim{S_k}.$$
\end{definition}

%For a matrix $\mb L \in \Re^{m \times n}$ with rank $r$ and rank-reduced singular value decomposition $\mb L = \mb U \mb S \mb V^*$, we let 
%\begin{equation}\label{eqn:T-def}
%\Top[\mb L] = \set{ \mb U \mb X + \mb Y \mb V^* \mid \mb X \in \Re^{r \times n} \;\text{and}\; \mb Y \in \Re^{m \times r} }. 
%\end{equation}

\section{Relationship to the Literature}\label{sec:relation}
As mentioned above, in recent years there has been a large amount of work on matrix recovery and decomposition, for example see \cite{Candes2011-JACM,Chandrasekharan2011-SJO,Zhou2010-ISIT,Ganesh2010-ISIT,Xu2011-IT,McCoy2011-EJS,Agarwal2011-pp,Hsu2011-IT} and references therein. The aforementioned works all pertain to the case when the matrix $\mb M$ is fully observed, and hence are not directly comparable to our result. In Section \ref{sec:upgrade}, we will see that our analysis gives a tool for transforming a certificate of optimality for the fully observed problem into a certificate of optimality for the compressive problem. Because this technique is modular, it may be possible to apply it in conjunction with the aforementioned works to prove correct recovery under different assumptions, and even with different regularizers.

Compared to the fully observed problem, there is much less dedicated work on low-rank and sparse recovery from compressive measurements. Recently, motivated by applications in compressive foreground and background separation and compressive hyperspectral image acquisition, \cite{Waters2011-NIPS} introduced a greedy algorithm for this problem, which aims at the objective function
\begin{equation}
\text{minimize}_{\mb L,\mb S} \;\; \|\mb D - \PQ[\mb L + \mb S]\|_2 \quad \mathrm{subject \; to} \quad \rank{\mb L} \le r,\; \|\mb S\|_0 \le k.
\end{equation}
Their algorithm is similar in spirit to the CoSaMP algorithm of \cite{NeedellD2008} for recovering sparse signals, and performs well on numerical examples. Analyzing its behavior theoretically and proving performance guarantees is currently an open problem. 

As the body of results on specific problems such as matrix recovery grows, there has been an increasing interest in unifying or generalizing the basic insights obtained from studying special cases. A number of groups have produced results that pertain to general structured regularizers. For example, Negahban et.\ al. \cite{Negahban2010-pp} have introduced a general geometric framework for analyzing low-complexity signal recovery, highlighting the role of the regularizer in overcoming a lack of strong convexity in the loss. Agarwal et.\ al.\  \cite{Agarwal2011-pp} use this framework to analyze sparse and low-rank decomposition, and have obtained tight results for estimation in noise, stronger than previously known results by \cite{Zhou2010-ISIT}. Their analysis proceeds under different (weaker) assumptions, which preclude exact recovery. 

In a similar vein, Chandrasekaran et.\ al.\ \cite{Chandrasekaran2010-pp} have recently produced a very general analysis of structured signal recovery with Gaussian measurements. That work exploits the geometry of the atomic norm ball -- in particular, relating the required number of measurements to the Gaussian width of the tangent cone at the desired solution. Based on this, they give tight bounds on the number of measurements needed to recover a low-rank matrix or sparse vector. However, once the atomic set contains both low-rank and sparse matrices, it is less clear how to analyze the Gaussian width of the tangent cone. Indeed, the non-trivial analysis in \cite{Candes2011-JACM,Chandrasekharan2011-SJO} can be viewed as simply showing that the desired solution lies on the boundary of the norm ball. Estimating the width of the tangent cone at that point seems to entail additional analytical difficulty. 

For Gaussian measurements, the recent work of Cand\`{e}s and Recht  \cite{Candes2011-pp} also gives simple bounds for exact recovery, under the assumption that the regularizer (or norm) is {\em decomposable}. If we wished to apply similar analysis to our problem, we would need to work with the quotient norm on $\mb M$:
\begin{equation}
\| \mb M \|_\diamond \doteq \inf_{\mb L + \mb S = \mb M} \| \mb L \|_* + \lambda \| \mb S \|_1.
\end{equation}
This is the {\em infimal convolution} of two decomposable terms. Its subdifferential has a number of nice properties which we will exploit in our analysis, but decomposability (in the sense of \cite{Candes2011-pp}) does not appear to be one of them. Nevertheless, the results in this paper show that under suitable conditions, we should expect the same type of compressive sensing results for this class of generalized norms for superpositions of low-complexity components. 

In this paper, we generalize the analysis of decomposable regularizers to their sums (or strictly speaking infimal convolutions) and obtain nearly optimal bounds on the required number of measurements for exact recovery and decomposition of low-complexity components via convex optimization. In particular, our results provide strong theoretical justification for conducting robust principal component analysis with highly compressive measurements. Because our results assume a random model for the operator $\mc Q$, of the two application scenarios described in the introduction, our results are likely to be more applicable to the compressive sampling scenario. Indeed -- the challenge for analyzing transformed matrix recovery problems seems not to lie in elucidating the absolutely minimum number of ``measurements'', but rather in dealing with dependencies between the operator $\mc Q$ and the solution of interest $(\mb L_0,\mb S_0)$. In a companion paper \cite{Ganesh2012}, we give results for deterministic operators $\mc Q$, which may be applicable to that situation.

\section{General Certificate Upgrades} \label{sec:upgrade}

In this section, we present the technical result used to obtain Theorem \ref{thm:main} above. As promised, this result will have implications for compressive variants of a large number of conceivable signal decomposition problems. In full generality, we can imagine that the fully observed data $\mb M$ are given as a sum of structured terms: 
\begin{equation}
\mb M \;=\; \mb X_1 + \mb X_2 + \dots + \mb X_\tau,
\end{equation}
where each $\mb X_i$ satisfies a low-complexity model such as sparsity or rank-deficiency, possibly also including more exotic types of structured sparsity \cite{Bach2010-pp}. For each type of structure, we have a corresponding regularizer $\| \cdot \|_{(i)}$. The natural convex heuristic for decomposing $\mb M$ into its components would solve 
\begin{equation} \label{eqn:general}
\minimize{\sum_i \lambda_i \norm{\mb X_i}{(i)}}{ \sum_i \mb X_i = \mb M },
\end{equation}
where the $\lambda_i > 0$ are scalar weight factors. Many authors have studied special cases of this problem, and given conditions under which correct decomposition occurs. A prime example is Principal Component Pursuit; others include Outlier Pursuit \cite{Xu2011-IT,McCoy2011-EJS} and Morphological Component Analysis \cite{Bobin2006-TSP}. 

The goal of this paper is not to study \eqref{eqn:general} per se, but rather to understand what happens to it when we only observe compressive measurements of $\mb M$ (or when $\mb M$ itself is subject to some transformation): 
\begin{equation} \label{eqn:general-c}
\minimize{\sum_i \lambda_i \norm{\mb X_i}{(i)}}{ \PQ \left[\sum_i \mb X_i\right] = \PQ \mb M }.
\end{equation}
Suppose we know that \eqref{eqn:general} correctly decomposes $\mb M$ into $\mb X_1, \dots, \mb X_\tau$. Does this imply that \eqref{eqn:general-c} can also recover $\mb X_1, \dots, \mb X_\tau$? At a slightly more technical level, we can ask whether a certificate of optimality for the decomposition problem \eqref{eqn:general} can be refined to also certify optimality for the compressive decomposition problem \eqref{eqn:general-c}. Theorem \ref{thm:upgrade} below will imply that this is true under broad circumstances. Provided we have proved optimality for \eqref{eqn:general}, we can move to optimality for \eqref{eqn:general-c}, as long as the number of measurements $\dim{Q}$ is sufficiently large. In this sense, our analysis is modular: any technique can be used to perform the analysis of the original decomposition problem, provided it constructs an (approximate) dual certificate. 

\paragraph{Duality and Optimality.} 

Our result pertains to {\em decomposable} norms $\|\cdot \|_{(i)}$ \cite{Negahban2010-pp,Candes2011-pp}. This notion includes many sparsity inducing norms, such as the $\ell^1$ norm and nuclear norm (as above), as well as sums of block $\ell^p$ norms. 

\begin{definition} \label{def:decomp} We say that a norm $\|\cdot\|$ is {\bf \em decomposable} at $\mb X$ if there exists a subspace $T$ and a matrix $\mb S$ such that 
\begin{equation}
\partial \| \,\cdot\, \|(\mb X) \;=\; \set{ \, \mb \Lambda \mid \mc P_{T} \mb \Lambda \,=\, \mb S, \; 
\| \mc P_{T^\perp} \mb \Lambda \|^* \,\le\, 1 },
\end{equation}
where $\| \cdot \|^*$ denotes the dual norm of $\| \cdot \|$, and $\PTp$ is nonexpansive with respect to $\| \cdot \|^*$.
\end{definition}
For example, the $\ell^1$ norm satisfies this definition with $T = \supp{\mb X}$ and $\mb S = \sign{\mb X}$. The above definition is completely equivalent to that of \cite{Candes2011-pp}. It is also related to the definition of \cite{Negahban2010-pp}, but not strictly equivalent to it.\footnote{To be clear, in the sense of \cite{Negahban2010-pp},  a norm $\| \cdot \|$ is decomposable over a subspace pair $T,T^\perp$ if for all $\mb x \in T$, $\mb y \in T^\perp$, $\| \mb x + \mb y \| = \| \mb x \|+ \|\mb y \|$. If $\mb x \in T$, and $\| \cdot \|$ is decomposable over $T,T^\perp$ in the sense of \cite{Negahban2010-pp}, and the restriction of $\| \cdot \|$ to $T$ is differentiable at $\mb x$, then it is decomposable in the sense of Definition \ref{def:decomp}. On the other hand, norms that are decomposable in the sense of Definition \ref{def:decomp} need not be decomposable in the sense of \cite{Negahban2010-pp}, and so the two notions are not strictly comparable.} We assume that each $\| \cdot \|_{(i)}$ is decomposable at the target solution $\mb X_{i,\star}$, so per the above definition we have a sequence of subspaces $T_i$ and matrices $\mb S_i$ that define the subdifferentials of each of the regularizers $\| \cdot \|_{(i)}$. With this notation in mind, we can state a simple sufficient optimality condition for \eqref{eqn:general-c}:
\begin{lemma} \label{lem:optimality} Consider a feasible solution $\mb x_\star = (\mb X_{1,\star}, \dots, \mb X_{\tau,\star})$ to \eqref{eqn:general-c}. Suppose that each of the norms $\| \cdot \|_{(i)}$ is decomposable at $\mb X_{i,\star}$. If $T_1, \dots, T_\tau, Q^\perp$ are independent subspaces and there exists $\mb \Lambda$ satisfying $\mc P_{T_i} \mb \Lambda = \lambda_i \mb S_i$ and $\| \mc P_{T_i^\perp} \mb \Lambda \|_{(i)}^* < \lambda_i$ for each $i$, and $\PQp \mb \Lambda = \mb 0$,
then $\mb x_\star$ is the unique optimal solution to \eqref{eqn:general-c}.
\end{lemma}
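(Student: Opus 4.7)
The plan is to use the standard dual certificate / subgradient argument: take any feasible perturbation, lower bound the change in objective using subgradients of the decomposable norms, and then use the certificate $\mb \Lambda$ to cancel the first-order terms, leaving a strictly positive residual unless the perturbation is zero. Let $\mb h = (\mb H_1, \dots, \mb H_\tau)$ be any nonzero direction such that $\mb x_\star + \mb h$ is feasible, i.e., $\PQ \sum_i \mb H_i = 0$, so $\sum_i \mb H_i \in Q^\perp$. I want to show that $\sum_i \lambda_i \| \mb X_{i,\star} + \mb H_i \|_{(i)} > \sum_i \lambda_i \| \mb X_{i,\star}\|_{(i)}$.

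The first step is to apply decomposability term-by-term. For each $i$, pick $\mb F_i$ with $\|\mb F_i\|_{(i)}^* \le 1$ and $\mc P_{T_i^\perp}\mb F_i = \mb F_i$ achieving $\innerprod{\mb F_i}{\mc P_{T_i^\perp}\mb H_i} = \|\mc P_{T_i^\perp}\mb H_i\|_{(i)}$ (this exists by the definition of the dual norm). Then $\mb S_i + \mb F_i \in \partial \|\cdot\|_{(i)}(\mb X_{i,\star})$ by Definition \ref{def:decomp}, and the subgradient inequality gives
\begin{equation*}
\| \mb X_{i,\star} + \mb H_i\|_{(i)} \;\ge\; \|\mb X_{i,\star}\|_{(i)} + \innerprod{\mb S_i}{\mc P_{T_i} \mb H_i} + \|\mc P_{T_i^\perp} \mb H_i\|_{(i)}.
\end{equation*}
Multiplying by $\lambda_i$, summing, and using $\mc P_{T_i}\mb \Lambda = \lambda_i \mb S_i$ yields
\begin{equation*}
\sum_i \lambda_i\| \mb X_{i,\star} + \mb H_i\|_{(i)} - \sum_i \lambda_i \|\mb X_{i,\star}\|_{(i)} \;\ge\; \sum_i \innerprod{\mb \Lambda}{\mc P_{T_i}\mb H_i} + \sum_i \lambda_i \|\mc P_{T_i^\perp}\mb H_i\|_{(i)}.
\end{equation*}

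Next, I split $\mc P_{T_i}\mb H_i = \mb H_i - \mc P_{T_i^\perp} \mb H_i$ and use the certificate's two properties. Since $\PQp \mb \Lambda = \mb 0$ and $\sum_i \mb H_i \in Q^\perp$, the total $\sum_i \innerprod{\mb \Lambda}{\mb H_i} = \innerprod{\mb \Lambda}{\sum_i \mb H_i} = 0$. The remaining cross terms satisfy $|\innerprod{\mc P_{T_i^\perp}\mb \Lambda}{\mc P_{T_i^\perp}\mb H_i}| \le \|\mc P_{T_i^\perp}\mb \Lambda\|_{(i)}^* \,\|\mc P_{T_i^\perp}\mb H_i\|_{(i)}$ by duality. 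Combining, the objective gap is at least
\begin{equation*}
\sum_i \bigl(\lambda_i - \|\mc P_{T_i^\perp}\mb \Lambda\|_{(i)}^*\bigr) \|\mc P_{T_i^\perp}\mb H_i\|_{(i)} \;\ge\; 0,
\end{equation*}
and each coefficient is strictly positive by hypothesis.

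Finally, I argue uniqueness from equality cases. The displayed bound equals zero only if $\mc P_{T_i^\perp}\mb H_i = \mb 0$, i.e., $\mb H_i \in T_i$, for every $i$. Feasibility forces $\sum_i \mb H_i \in Q^\perp$, so $\sum_i \mb H_i$ lies in $(T_1 + \cdots + T_\tau) \cap Q^\perp$. Independence of $T_1,\dots,T_\tau, Q^\perp$ implies this intersection is $\{\mb 0\}$ and, moreover, the $T_i$ are themselves independent, so $\sum_i \mb H_i = \mb 0$ with $\mb H_i \in T_i$ forces each $\mb H_i = \mb 0$. Thus the optimum is unique. The only mildly delicate point is the subgradient-selection step (constructing $\mb F_i$ achieving the dual pairing), but this is a standard consequence of the duality $\|\mb Y\|_{(i)} = \sup_{\|\mb Z\|_{(i)}^* \le 1}\innerprod{\mb Z}{\mb Y}$ together with the nonexpansiveness of $\mc P_{T_i^\perp}$ with respect to $\|\cdot\|_{(i)}^*$ built into Definition \ref{def:decomp}; everything else is essentially bookkeeping.
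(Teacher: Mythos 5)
Your proposal is correct and follows essentially the same argument as the paper's proof in Appendix A: the same choice of subgradient $\mb S_i + \mc P_{T_i^\perp}\mb F_i$ via norm duality and nonexpansiveness, the same cancellation of $\innerprod{\mb \Lambda}{\sum_i \mb H_i}$ using $\PQp \mb \Lambda = \mb 0$ and feasibility, and the same equality-case analysis via independence of $T_1,\dots,T_\tau,Q^\perp$. No gaps.
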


Notice that this condition implies that $\mb \Lambda$ lies in the subdifferential of $\lambda_i \| \cdot \|_{(i)}$ for each $i$. The proof of Lemma \ref{lem:optimality} follows a familiar form, and is given in Appendix \ref{app:gen-duality}. Notice that if we take $Q = \Re^{m \times n}$ in Lemma \ref{lem:optimality}, we obtain a sufficient optimality condition for the original decomposition problem \eqref{eqn:general}. The condition given by Lemma \ref{lem:optimality} is not so convenient to directly work with, because it demands that $\mb \Lambda$ exactly satisfies a set of equality constraints $\PTi \mb \Lambda = \lambda_i \mb S_i$. One very useful device, due to Gross \cite{Gross2009-pp}, is to trade off between the equality constraints and the dual norm inequality constraints $\| \PTip \mb \Lambda \|_{(i)}^* < \lambda_i$, tightening the latter while loosening the former. The following definition gives this idea a name:

\begin{definition} \label{def:dec-cert} We call $\mb \Lambda$ an {\bf \em $(\alpha,\beta)$-inexact certificate} for a putative solution $(\mb X_{1,\star}, \dots, \mb X_{\tau,\star})$ to \eqref{eqn:general} with parameters $(\lambda_1, \dots, \lambda_\tau)$ if for each $i$, $\|\PTi \mb \Lambda - \lambda_i \mb S_i\|_F \;\le\; \alpha$, and $\| \PTip \mb \Lambda \|_{(i)}^* < \lambda_i \beta$.
\end{definition}

Comparing to the optimality condition in Lemma \ref{lem:optimality}, we can see that this definition is most meaningful when $\alpha$ is small, and $\beta \le 1$. Indeed, a number of simple and powerful analyses of problems such as matrix completion and robust low-rank matrix recovery proceed by constructing an inexact certificate for which $\alpha$ is polynomial in $m^{-1}$, and $\beta$ is a moderate constant, say, $1/2$. 

Definition \ref{def:dec-cert} pertains to the decomposition problem \eqref{eqn:general}, and does not involve the measurement operator $Q$ in any way. Adding one additional constraint, $\PQp \mb \Lambda = \mb 0$, we obtain an inexact certificate for the compressive decomposition problem \eqref{eqn:general-c}:

\begin{definition} \label{def:comp-dec-cert}  We call $\mb \Lambda$ an {\bf \em $(\alpha,\beta)$-inexact certificate} for a putative solution $(\mb X_{1,\star}, \dots, \mb X_{\tau,\star})$ to \eqref{eqn:general-c} with parameters $(\lambda_1, \dots, \lambda_\tau)$ if
\begin{itemize}
\item[(i)] $\mb \Lambda$ is an $(\alpha,\beta)$ inexact certificate for \eqref{eqn:general}, and
\item[(ii)] $\PQp \mb \Lambda = \mb 0$. 
\end{itemize}
\end{definition}

As we will see, an inexact certificate is easier to produce than the ``exact'' $\mb \Lambda$ demanded in the optimality condition Lemma \ref{lem:optimality}. Is it still sufficient to certify optimality? The following lemma shows the answer is {\em yes}, provided $\alpha$ and $\beta$ are small enough:

\begin{lemma}\label{lem:sc-relaxed} Consider a feasible solution $\mb x_\star = (\mb X_{1,\star}, \dots, \mb X_{\tau,\star})$ to the optimization problem \eqref{eqn:general-c}. Suppose that each of the norms $\| \cdot \|_{(i)}$ is decomposable at $\mb X_{i,\star}$, and that each of the $\| \,\cdot\, \|_{(i)}$ majorizes the Frobenius norm. Then if $T_1, \dots, T_\tau,Q^\perp$ are independent subspaces with 
\begin{equation}
\| \mc P_{T_i} \mc P_{T_j} \| \;<\; \frac{1}{\tau-1} \quad \forall \, i \ne j, 
\end{equation}
and there exists an $(\alpha,\beta)$-inexact certificate $\Lh$, with
\begin{equation}
\beta + \frac{\alpha \sqrt{\tau}}{(1- \| \mc P_{Q^\perp} \mc P_{T_1 + \dots + T_\tau} \|^2)\sqrt{ 1 - (\tau-1) \max_{ij} \| \mc P_{T_i} \mc P_{T_j} \|}} \times \frac{1}{\min_l \lambda_l} < 1,
\end{equation}
then $\mb x_\star$ is the unique optimal solution. 
\end{lemma}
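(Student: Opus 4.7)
The plan is to take an arbitrary feasible perturbation $(\mb X_{1,\star} + \mb H_1, \dots, \mb X_{\tau,\star} + \mb H_\tau)$ of the proposed solution and show that the objective value strictly increases unless every $\mb H_i = \mb 0$. Feasibility forces $\sum_i \mb H_i \in Q^\perp$. First, for each $i$, decomposability of $\|\cdot\|_{(i)}$ at $\mb X_{i,\star}$ lets me choose an element of $\partial\|\cdot\|_{(i)}(\mb X_{i,\star})$ whose $T_i^\perp$-part achieves the dual-norm duality against $\PTip \mb H_i$. The standard subgradient inequality then gives
\begin{equation*}
f(\mb H) \;\doteq\; \sum_i \lambda_i \|\mb X_{i,\star} + \mb H_i\|_{(i)} - \sum_i \lambda_i \|\mb X_{i,\star}\|_{(i)} \;\ge\; \sum_i \lambda_i \innerprod{\mb S_i}{\mb H_i} + \sum_i \lambda_i \|\PTip \mb H_i\|_{(i)}.
\end{equation*}

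Next I would use the inexact certificate $\Lh$. Since $\PQp \Lh = \mb 0$ puts $\Lh \in Q$ while $\sum_i \mb H_i \in Q^\perp$, the pairing $\innerprod{\Lh}{\sum_i \mb H_i}$ vanishes. Splitting each $\innerprod{\Lh}{\mb H_i} = \innerprod{\PTi \Lh}{\PTi \mb H_i} + \innerprod{\PTip \Lh}{\PTip \mb H_i}$ and writing $\PTi \Lh = \lambda_i \mb S_i + \mb E_i$ with $\|\mb E_i\|_F \le \alpha$, this identity rearranges to
\begin{equation*}
\sum_i \lambda_i \innerprod{\mb S_i}{\mb H_i} \;=\; -\sum_i \innerprod{\mb E_i}{\PTi \mb H_i} - \sum_i \innerprod{\PTip \Lh}{\PTip \mb H_i}.
\end{equation*}
Substituting into the subgradient inequality, bounding each inner product using $\|\mb E_i\|_F \le \alpha$ and $\|\PTip \Lh\|_{(i)}^* < \lambda_i \beta$, and invoking $\|\cdot\|_{(i)} \ge \|\cdot\|_F$ yields the clean bound
\begin{equation*}
f(\mb H) \;\ge\; -\alpha \sum_i \|\PTi \mb H_i\|_F \;+\; (1-\beta)\,\min_l \lambda_l \cdot \sum_i \|\PTip \mb H_i\|_F.
\end{equation*}

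The main technical work is then to control $\sum_i \|\PTi \mb H_i\|_F$ in terms of $\sum_i \|\PTip \mb H_i\|_F$. Let $\mb V = \sum_i \PTi \mb H_i \in T := T_1 + \dots + T_\tau$ and $\mb W = \sum_i \PTip \mb H_i$. Expanding $\|\mb V\|_F^2$, bounding cross-terms via $|\innerprod{\PTi \mb H_i}{\PTj \mb H_j}| \le \|\PTi \PTj\|\,\|\PTi \mb H_i\|_F\,\|\PTj \mb H_j\|_F$, and applying Cauchy--Schwarz produces
\begin{equation*}
\sum_i \|\PTi \mb H_i\|_F \;\le\; \sqrt{\frac{\tau}{1-(\tau-1)\max_{i\neq j}\|\PTi \PTj\|}}\,\|\mb V\|_F.
\end{equation*}
Separately, $\mb V + \mb W \in Q^\perp$ implies $\PQ \mb V = -\PQ \mb W$, while $\mb V \in T$ gives $\|\PQ \mb V\|_F^2 \ge (1 - \|\PQp \PT\|^2)\|\mb V\|_F^2$; combining with $\|\PQ \mb W\|_F \le \|\mb W\|_F \le \sum_i \|\PTip \mb H_i\|_F$ produces the required bound on $\|\mb V\|_F$. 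Chaining and comparing with the hypothesized strict inequality shows $f(\mb H) > 0$ whenever $\sum_i \|\PTip \mb H_i\|_F > 0$.

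For uniqueness: if instead $\sum_i \|\PTip \mb H_i\|_F = 0$ then each $\mb H_i \in T_i$, so $\sum_i \mb H_i \in T \cap Q^\perp = \{\mb 0\}$ by the stated independence of $T_1, \dots, T_\tau, Q^\perp$; the independence of the $T_i$'s alone then forces each $\mb H_i = \mb 0$. The main obstacle I anticipate is the subspace-angle bookkeeping in step three: extracting the exact $1/(\tau-1)$ incoherence threshold from the cross-terms is a tight Cauchy--Schwarz calculation, and a careless handling would yield a worse dependence on $\tau$ than the lemma advertises. The remaining manipulations are elementary.
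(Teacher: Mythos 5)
Your proof is correct, and it takes a genuinely different route from the paper's. The paper first \emph{upgrades} the inexact certificate to an exact one: it solves the system $\PTi \mb \Delta = \lambda_i \mb S_i - \PTi \Lh$ over the independent subspaces (via an auxiliary least-squares lemma whose minimum-eigenvalue bound is where the $1-(\tau-1)\max_{i\neq j}\|\PTi\mc P_{T_j}\|$ factor enters), then pushes the correction into $Q$ by a Neumann series (which is where the $1-\|\PQp\PT\|^2$ factor enters), and finally invokes the exact optimality condition of Lemma \ref{lem:optimality} for $\mb \Lambda = \Lh + \mb \Delta_\star$. You instead run the primal perturbation argument directly against the inexact certificate, pairing $\Lh$ with $\sum_i \mb H_i \in Q^\perp$ and absorbing the errors $\mb E_i = \PTi\Lh - \lambda_i\mb S_i$; the same two geometric quantities appear, but on the primal side, when you relate $\sum_i\|\PTi\mb H_i\|_F$ to $\|\sum_i \PTi \mb H_i\|_F$ and then to $\sum_i\|\PTip\mb H_i\|_F$. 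Both arguments are sound; the paper's buys modularity (an exact certificate is produced as a reusable object, and the least-squares lemma is generic), while yours is more self-contained and in fact slightly sharper: you only need $\|\PQ \mb V\|_F \ge (1-\|\PQp\PT\|^2)^{1/2}\|\mb V\|_F$, so the factor $1-\|\PQp\PT\|^2$ in the lemma's denominator could be replaced by its square root under your argument, whereas the Neumann-series bound genuinely costs the first power. Your tie-breaking argument for uniqueness (each $\mb H_i \in T_i$ forces $\sum_i \mb H_i \in (T_1+\dots+T_\tau)\cap Q^\perp = \{\mb 0\}$ and then $\mb H_i = \mb 0$ by independence) matches the paper's handling of the same case inside Lemma \ref{lem:optimality}.
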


We prove this lemma in Section \ref{sec:inex-ex-upgrade}, using a least squares perturbation argument. The additional technical condition that $\|\cdot\|_{(i)}$ majorizes the Frobenius norm (i.e., for all $\mb X$, $\| \mb X \|_{(i)} \ge \| \mb X \|_F$) is immediately satisfied by sparsity inducing norms such as the nuclear and $\ell^1$ norms. In any case, it can always be ensured by rescaling. 

\begin{remark}The denominator in the condition of Lemma \ref{lem:sc-relaxed} depends on our knowledge of the relative orientation of the subspaces $T_1, \dots, T_\tau$ and $Q$. We have stated the lemma in a way that assumes bounds on the angles of each pair $(T_i,T_j)$ and between $T_1 + \dots + T_\tau$ and $Q$, but demands no additional knowledge. A tighter accounting is possible if more is known about the configuration of $(T_1,\dots,T_\tau,Q)$. 
\end{remark}

Thus, to show that $\mb X_1, \dots, \mb X_\tau$ solve the compressive decomposition problem \eqref{eqn:general-c}, we just have produce an inexact certificate $\mb \Lambda$ following the specification of Definition \ref{def:comp-dec-cert} with $(\alpha,\beta)$ sufficiently small. This is fortuitous, since many existing analyses of the original decomposition problem \eqref{eqn:general} already give certificates for that problem. For example, for Principal Component Pursuit, we can leverage existing constructions in \cite{Candes2011-JACM}. To prove that the desired solution remains optimal even when we only see a few measurements $Q$, we will show that a certificate for \eqref{eqn:general} can be ``upgraded'' to a certificate for \eqref{eqn:general-c}, with very high probability in the choice of random $Q$, and only a small loss in the parameters $(\alpha,\beta)$. 

Of course, intuitively speaking, this should only be possible if the number of measurements is sufficient: if the number of measurements in $Q$ is smaller than the number of degrees of freedom in $\mb x_\star$, then reconstruction from the compressive measurements $\PQ \mb M$ should not be possible. Interestingly, however, we will see that the number of measurements does not need to be too much larger than the number of degrees of freedom in $\mb x_\star$: oversampling by $O(\log^2 m)$ will suffice. We have been a bit vague about what we mean by the number of degrees of freedom in the signal. To be precise, our theorem will refer to the quantity $\dim{T_1 + \dots + T_\tau}$. Indeed, for the $\ell^1$ norm, $\dim{T_i}$ is the number of nonzero entries in the solution $\mb X_i$. For the nuclear norm, one can check that $\dim{T_i}$ is the number of degrees of freedom in specifying a matrix whose rank is equal to that of $\mb X_i$. 

Our main theorem states that with very high probability it is possible to ``upgrade'' a certificate for the decomposition problem \eqref{eqn:general} to one for the compressive decomposition problem \eqref{eqn:general-c}, with only small loss in parameters $(\alpha,\beta)$. As it turns out, the loss in the dual norm $\| \cdot \|_{(i)}^*$ will be bounded by the expected dual norm of a standard Gaussian matrix. We will let $\nu_i$ denote this quantity: 
\begin{equation}
\nu_i \doteq \expect{\| \mb G \|_{(i)}^*}, \qquad \mb G \sim_{iid} \normal{0}{1}.
\end{equation}
We have the following theorem: 

\begin{theorem} [\bf Certificate Upgrade] \label{thm:upgrade} Consider the general decomposition problem \eqref{eqn:general}, and suppose that each of the norms $\| \cdot \|_{(i)}$ majorizes the Frobenius norm. Let $\mb x_\star = (\mb X_{1,\star}, \dots, \mb X_{\tau,\star})$ be feasible for \eqref{eqn:general}, and suppose there exists an $(\alpha,\beta)$-inexact certificate for $\mb x_\star$ for the decomposition problem \eqref{eqn:general} with parameters $(\lambda_i)$. 

Then if $Q \subset \Re^{m \times n}$ is a random subspace distributed according to the Haar measure, with
\begin{equation} \label{eqn:Q-perturb-meas}
\dim{Q} \;\ge\; C_Q \cdot \dim{T_1 + \dots + T_\tau} \cdot \log^2 \, m,
\end{equation}
there exists an $(\alpha',\beta')$-inexact certificate for $\mb x_\star$ for the {\em compressive} decomposition problem \eqref{eqn:general-c} with 
\begin{eqnarray}
\alpha' &\le& \alpha + m^{-3} \|\Wh \|_F, \\
\beta'  &\le& \beta  +  C_1 \max_i \frac{\nu_i + \sqrt{\log m}}{\lambda_i} \left(  \frac{\|\Wh\|_F^2 \log m}{\dim{Q}} \right)^{1/2}, \label{eqn:beta-perturb}
\end{eqnarray}
with probability at least $1-C_2 \cdot \tau \cdot m^{-9}$ in $Q$. Above, $C_Q$, $C_1$ and $C_2$ are positive numerical constants. 
\end{theorem}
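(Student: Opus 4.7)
The plan is to produce $\mb\Lambda$ by a Gross-style golfing scheme adapted to the Haar-distributed subspace $Q$. By rotational invariance of the Haar measure on $\bb G(\Re^{m\times n},q)$, the subspace $Q$ has the same distribution as the column span of an $mn \times q$ matrix with iid $\N(0,1)$ entries, where $q = \dim{Q}$. Partitioning those columns into $K = \lceil c \log m \rceil$ independent blocks produces iid sub-operators $\mc A_1,\dots,\mc A_K$ whose ranges $Q_k = \range{\mc A_k^*}$ are independent of one another and satisfy $Q_1 \oplus \cdots \oplus Q_K = Q$ almost surely. Independence across blocks is what powers the iterative construction.

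Let $T = T_1 + \cdots + T_\tau$. The core of the construction is the golfing recursion: initialize $\mb R_0 = \PT \hat{\mb\Lambda}$ and set
\begin{equation*}
\mb R_k \;=\; \mb R_{k-1} - \tfrac{K}{q} \PT \mc A_k^* \mc A_k \mb R_{k-1}, \qquad \mb Y_k \;=\; \tfrac{K}{q} \mc A_k^* \mc A_k \mb R_{k-1},
\end{equation*}
so that $\mb Y := \sum_{k=1}^K \mb Y_k$ lies in $Q$ and $\PT \mb Y = \PT \hat{\mb\Lambda} - \mb R_K$ by telescoping. The certificate is obtained from $\mb Y$ supplemented by a $Q$-feasible correction that absorbs the $\PTp \hat{\mb\Lambda}$ contribution, arranged so that $\mb\Lambda - \hat{\mb\Lambda}$ decomposes as a small $T$-component (the residual $\mb R_K$) plus a sum of mean-zero Gaussian fluctuations $\PTip(\mb Y_k - \mb R_{k-1})$, $k=1,\dots,K$, on each $T_i^\perp$. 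Because $T_i \subseteq T$ gives $\PTi = \PTi \PT$, control of $\|\mb R_K\|_F$ will translate directly into the $\alpha'$-bound, while control of the Gaussian fluctuations in $\|\cdot\|_{(i)}^*$ will give the $\beta'$-bound.

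The analysis reduces to two concentration inequalities. First, a Wishart-type operator-norm estimate: for each block $k$, with probability at least $1 - m^{-10}$,
\begin{equation*}
\bigl\| \PT - \tfrac{K}{q}\, \PT \mc A_k^* \mc A_k \PT \bigr\|_{T\to T} \;\le\; \tfrac{1}{2},
\end{equation*}
valid provided $q/K \gtrsim \dim{T} \cdot \log m$. Iterating yields $\|\mb R_k\|_F \le 2^{-k}\|\hat{\mb\Lambda}\|_F$, hence $\|\mb R_K\|_F \le m^{-3}\|\hat{\mb\Lambda}\|_F$ for $K = C\log m$. Second, a Gaussian dual-norm fluctuation bound: for any fixed $\mb R$,
\begin{equation*}
\bigl\| \tfrac{K}{q}\, \mc A_k^* \mc A_k \mb R - \mb R \bigr\|_{(i)}^* \;\le\; C\, \frac{\|\mb R\|_F\,(\nu_i + \sqrt{\log m})}{\sqrt{q/K}}
\end{equation*}
with probability at least $1 - m^{-10}$. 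I would prove this by expanding $\mc A_k^* \mc A_k \mb R = \sum_j \innerprod{\mb G_j}{\mb R} \mb G_j$ and decomposing each Gaussian matrix $\mb G_j$ into its projection onto $\mb R/\|\mb R\|_F$ (a scalar Gaussian producing chi-square-type deviations) and its orthogonal component (an iid Gaussian matrix independent of that scalar, whose expected dual norm is exactly $\nu_i$). Summing over $k$ with the geometric decay of $\|\mb R_{k-1}\|_F$ and $K \asymp \log m$ yields the $\beta'$ perturbation at the rate $(\nu_i + \sqrt{\log m})\|\hat{\mb\Lambda}\|_F \sqrt{\log m/q}$.

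The main obstacle I anticipate is the Gaussian dual-norm fluctuation inequality and its adaptive use within the golfing loop. Since each $\mb R_{k-1}$ depends on $\mc A_1,\dots,\mc A_{k-1}$, the concentration bound has to be applied conditionally on the $\sigma$-algebra they generate; and the dependence on $\mb R$ must be uniform enough that a union bound over the $K$ iterations, the $\tau$ component norms, and a net on the unit ball of the dual norm all fit inside the $\log^2 m$ factor appearing in the hypothesis \eqref{eqn:Q-perturb-meas}. Once these two estimates are in hand, verifying Definition~\ref{def:comp-dec-cert} reduces to the triangle inequality applied to $\mb\Lambda - \hat{\mb\Lambda}$ together with the telescoping identity from the golfing recursion.
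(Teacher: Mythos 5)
Your construction is the paper's construction in all essentials: split the Gaussian matrices spanning $Q$ into $O(\log m)$ independent blocks, golf the certificate into $Q$ with a $\tfrac12$-contraction of the Frobenius residual per step (a Wishart-type operator estimate, Lemma~\ref{lem:op-appx} in the paper), and control the accumulated off-subspace fluctuations in each dual norm $\|\cdot\|_{(i)}^*$ at rate $(\nu_i+\sqrt{\log m})\|\mb R_{k-1}\|_F/\sqrt{q/K}$, summing the geometric series. The adaptivity issue you flag is handled exactly as you say, by conditioning on the earlier blocks.

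There is one under-specified step that is a genuine gap as written: you golf only over $T = T_1+\dots+T_\tau$, starting from $\mb R_0 = \PT\Wh$, and defer the $\PTp\Wh$ part to ``a $Q$-feasible correction that absorbs the $\PTp\Wh$ contribution.'' This cannot be waved away, because the hypothesis $\beta$ bounds $\|\PTip\Wh\|_{(i)}^*$, not $\|\PTip\PT\Wh\|_{(i)}^*$; since $T^\perp\subseteq T_i^\perp$ one has $\PTip\PT\Wh = \PTip\Wh - \PTp\Wh$, and decomposability gives nonexpansiveness of $\PTip$ but says nothing about $\|\PTp\Wh\|_{(i)}^*$, so the certificate built from $\PT\Wh$ alone need not inherit the $\beta$ bound. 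The unstated correction must itself reproduce $\PTp\Wh$ inside $Q$ with dual-norm control, which requires another golfing pass. The paper's fix costs one dimension: it golfs over $S = T_1+\dots+T_\tau+\mathrm{span}(\Wh)$, so the entire $\Wh$ is replicated in a single pass and $\PTip\cert{k}$ is directly comparable to $\PTip\Wh$. A second, smaller point: your plan to union-bound over ``a net on the unit ball of the dual norm'' is the wrong tool for the fluctuation lemma (that ball is not small in a way governed by $\dim{T}$); the paper's Lemma~\ref{lem:skew-infty} avoids nets entirely by exploiting the independence of $\PS\mb H_l$ and $\PSp\mb H_l$, then applying Slepian's comparison to a single Gaussian matrix (which is where $\nu_i$ enters) together with Lipschitz concentration of the supremum. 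Your decomposition of each $\mb G_j$ into its component along $\mb R$ and an independent orthogonal part is morally the same decoupling, so this is fixable, but the net-based union bound as stated would not close.
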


\begin{remark} As will become clear in the proof, the degrees $m^{-3}$ and $m^{-9}$ above are arbitrary, and can be set to be any constants by appropriate choice of $C_Q, C_1, C_2$. 
\end{remark}

\begin{remark} [\bf Scaling in \eqref{eqn:beta-perturb}] A casual glance at \eqref{eqn:beta-perturb} may suggest that we can make $\beta'$ arbitrarily close to $\beta$, by setting $(\lambda_i)$ large. This is actually not the case: the initial certificate $\Lh$ must satisfy $\PTi \Lh \approx \lambda_i \mb S_i$. Scaling all of the $\lambda_i$ by the same amount will also scale $\Lh$, causing no effective change to the right hand side of \eqref{eqn:beta-perturb}. 

On the other hand, Theorem \ref{thm:upgrade} suggests an interesting practical role for the expected norms $\nu_i$ in choosing the {\em relative} values of $\lambda_i$. Namely, it suggests setting $\lambda_i \propto \nu_i$. This is consistent (within logarithmic factors) with suggestions in \cite{Candes2011-JACM}, and could suggest a principled way of combining many such structure-inducing norms as in \eqref{eqn:general-c}. 
\end{remark}

\newcommand{\Eerr}{\event_{\mathrm{err}}}
\newcommand{\Einfty}{\event_\infty}
\newcommand{\Eop}{\event_{\mathrm{op}}}
\newcommand{\Egood}{\event_{\mathrm{good}}}

\begin{proof}[\bf \em Proof of Theorem \ref{thm:upgrade}] Let 
\begin{equation}
S = T_1 + \dots + T_\tau + \mathrm{span}(\Wh),
\end{equation}
where $+$ denotes Minkowski summation. Then $S$ is a linear subspace of dimension at most $\dim{T_1 + \dots + T_\tau} + 1$ containing $\Wh$. Our goal is to generate a certificate $\cert{\star}$ that is close to $\Wh$ on $S$, and also satisfies 
\begin{equation}\label{eqn:Wstar-goal}
\PQp\cert{\star} = \mb 0.
\end{equation}
 Such a $\cert{\star}$ would inherit the good properties of $\Wh$ on $T_1 + \dots + T_\tau \subseteq S$, and also satisfy the additional equality constraint \eqref{eqn:Wstar-goal} -- in effect, certifying that the measurements are sufficient. 

To this end, we will set
\begin{equation}
\cert{0} \;=\; \mb 0,
\end{equation}
We will generate inductively a sequence $(\cert{j})_{j = 1, \dots, k}$ for appropriate $k$, such that with high probability $\cert{\star} = \cert{k}$ is the desired certificate. The initial guess $\cert{0}$ obviously satisfies \eqref{eqn:Wstar-goal}, but could be very far from $\Wh$ on $S$. Define the error at step $j$ to be 
\begin{equation}
\err{j} \;=\; \PS [ \cert{j} ] - \Wh \;\in\; S. 
\end{equation}
We will generate a sequence of corrections, each of which lies in $Q$, that drive $\err{j}$ toward zero. 

By orthogonal invariance, $Q$ is equal in distribution to the linear span of $$\mb H_1, \dots, \mb H_{\dim{Q}},$$
where $\mb H_j$ are independent iid $\normal{0}{1/mn}$ random matrices. Choose from $\set{ 1, \dots, \dim{Q} }$, 
\begin{equation} 
k = \ceiling{ 3 \log_2 m }
\end{equation}
disjoint subsets $I_1, \dots, I_k$ of size 
\begin{equation}
\gamma \;=\; \floor{ \frac{\dim{Q}}{k} }.
\end{equation}
Our choice of constant ensures that $2^{-k} \le m^{-3}$. We will require that 
\begin{equation} \label{eqn:gamma-bound}
\gamma \;\ge\; C_3 \cdot \dim{S}, 
\end{equation}
where $C_3$ is a numerical constant to be specified later. Since by assumption $$\dim{Q} \;\ge\; C_Q \cdot \dim{T_1 + \dots + T_\tau} \cdot \log^2(m),$$ once $C_3$ is chosen, we can ensure that $C_Q$ is large enough that \eqref{eqn:gamma-bound} holds. 

Let $\mc A_j : \Re^{m \times n} \to \Re^{m \times n}$ denote the semidefinite operator that acts via
\begin{equation}
\mc A_j[ \cdot ] = \sum_{i \in I_j} \mb H_i \< \mb H_i, \cdot \>.
\end{equation}
Notice that $\expect{ \mc A_j } = \frac{\gamma}{mn} \mc I$. For $j = 1, \dots, k$, let 
\begin{eqnarray}
\cert{j} &=& \cert{j-1} - \partop{j} \err{j-1} \nonumber \\
&=& - \sum_{i=1}^j \partop{i} \err{i-1}. \label{eqn:W-sum}
\end{eqnarray}
Then we have 
\begin{eqnarray}
\err{j} 
	&=& \PS[ \cert{j} ] - \Wh \nonumber \\ 
	&=& \PS[ \cert{j-1} ] - \Wh - \PS \partop{j} \err{j-1} \nonumber \\
	&=& \PS \contract{j} \PS \err{j-1}. \label{eqn:E-contract}
\end{eqnarray}
In paragraph (i) below, we will use this expression to control the Frobenius norm of $\err{k}$. We may further write 
\begin{eqnarray}
\cert{k} 
	&=& \PS [ \cert{k} ] + \PSp [ \cert{k} ], \nonumber \\
	&=& \Wh + \err{k} - \sum_{j=1}^k \PSp \partop{j} \PS \err{j-1}, \label{eqn:cert-Sp}
\end{eqnarray}
where in \eqref{eqn:cert-Sp} we have used that $\err{j} \in S$ for all $j$. In paragraphs (ii)-(iii) below, we will use this final expression to control the dual norms of $\cert{k}$.

\paragraph{(i) Driving $\mb E$ to zero.}

Ensuring that $C_3$ in \eqref{eqn:gamma-bound} is sufficiently large that the hypotheses of Lemma \ref{lem:op-appx} are verified, we have that with probability at least $1 - C_4 \exp(- c_1 \gamma )$, 
\begin{equation}
\norm{\PS \contract{j} \PS}{} \;=\; \norm{\PS \partop{j} \PS - \PS}{} \;\le\; \frac{1}{2}. 
\end{equation}
Hence, by \eqref{eqn:E-contract}, we have $\norm{\err{j}}{F} \le \tfrac{1}{2} \norm{\err{j-1}}{F}$ for each $j$, on the complement of a bad event $\Eerr$ of probability at most $C_4 k \exp(- c_1 \gamma )$. On $\Eerr^c$, 
$\norm{\err{j}}{F} \le 2^{-j} \norm{\err{0}}{F}$
for each $j$, giving 
\begin{equation}
	\norm{\err{k}}{F} 
		\le 2^{-k} \|\Wh\|_F 
		\quad \text{and} \quad 
	\sum_{j=0}^k \norm{\err{j}}{F} 
		\le 2 \| \Wh\|_F. 
\end{equation}

\paragraph{(ii) Analysis of $\alpha'$.} From the definition, we may set $\alpha' = \max_i \| \PTi \cert{k} - \lambda_i \mb S_i \|_F$. On $\Eerr^c$,
\begin{eqnarray}
	\norm{\PTi \cert{k} - \lambda_i \mb S_i }{F} &=& \|\PTi[ \Wh + \err{k} ] - \lambda_i \mb S_i\|_F,  \nonumber \\
		&\le& \| \PTi \Wh - \lambda_i \mb S_i \|_F  + \norm{ \err{k} }{F},           \nonumber \\
		&\le& \| \PTi \Wh - \lambda_i \mb S_i \|_F  + 2^{-k} \|\Wh\|_F,                     \nonumber \\
		&\le& \| \PTi \Wh - \lambda_i \mb S_i \|_F  + m^{-3} \|\Wh\|_F.
\end{eqnarray}
Since the first term is bounded by $\alpha$, the claim in \eqref{eqn:first} is established.

\paragraph{(iii) Analysis of $\beta'$.} Similarly, for $\beta'$, we can take 
\begin{equation}
\beta' = \max_{i = 1, \dots, \tau} \; \lambda_i^{-1} \| \cert{k} \|_{(i)}^* 
\end{equation}
From \eqref{eqn:cert-Sp} and the triangle inequality, we have 
\begin{eqnarray}
	\norm{\cert{k}}{(i)}^* 
		&\le& \| \Wh\|_{(i)}^* + \norm{\err{k}}{F} + \sum_{j=1}^k \norm{\PSp \partop{j} \PS \err{j-1} }{(i)}^*,
\end{eqnarray}
where we used the fact that whenever the primal norm majorizes the Frobenius norm, its dual minorizes the Frobenius norm. Applying Lemma \ref{lem:skew-infty}, this is bounded by 
\begin{eqnarray}
	\norm{\cert{k}}{(i)}^* 
		&\le& \lambda_i \beta + 2^{-k} \norm{\err{0}}{F} + 10 \frac{\nu_i + \sqrt{\log m}}{\sqrt{\gamma}} \sum_{j=1}^k \norm{\err{j-1}}{F}, \nonumber \\
		&\le& \lambda_i \beta + \left( 2^{-k} + 20 \frac{\nu_i + \sqrt{\log m}}{\sqrt{\gamma}} \right) \norm{\err{0}}{F}, \nonumber \\
		&\le& \lambda_i \beta + 21 \frac{\nu_i + \sqrt{\log m}}{\sqrt{\gamma}} \|\Wh\|_F
\end{eqnarray}
on the complement of an event $\Einfty$ of probability at most $2 k m^{-10} + k \expfrac{\gamma}{2} + \prob{\Eerr}$. In the final line, we have used that $2^{-k} \le m^{-3}$ and $\gamma \le m^2$. Since $\gamma \;\ge\; \frac{c \cdot \dim{Q}}{\log m}$, for some numerical constant $C_5$,
\begin{equation}
\lambda_i^{-1} \norm{\cert{k}}{(i)}^* \;\le\; \beta + C_5 \frac{\nu_i + \sqrt{\log m}}{\lambda_i} \left( \frac{ \| \Wh \|_F^2 \,\log m  }{\dim{Q}} \right)^{1/2}. \label{eqn:infty-final}
\end{equation}
Taking a union bound over $i = 1, \dots, \tau$, we have that the desired bounds on $\alpha'$ and $\beta'$ hold simultaneously on the complement of an event of probability
\begin{equation}
2k \tau m^{-10} + k \tau \exp\left(-\frac{\gamma}{2}\right) + C_4 (1+\tau)  k \exp\left(-c_1 \gamma\right). 
\end{equation}
Since $\gamma = \Omega(C_Q \log m)$, provided $C_Q$ is large enough, the exponential terms will also be bounded by $m^{-10}$. Using that $k = O(\log m)$, we obtain the promised bound on the probability of failure. 
\end{proof}

%We now use Theorem \ref{thm:upgrade} to prove Proposition \ref{prop:pcp-upgrade}. 
%
%\begin{proof}[Proof of Proposition \ref{prop:pcp-upgrade}:]
%In Theorem \ref{thm:upgrade}, set $R = T + \Omega$. Set $\mb G = - \UVt$, with $\| \mb G \|_F = \sqrt{r}$. Let $\mc F_1 = \PT$, and let $$\mc F_2 [\mb X ] = \PO[ \mb X - \lambda \cdot \mathrm{sign}(\mb S_0) + \UVt ].$$
%Notice that the condition $\mc F_i \mb X = \mb 0 \;\forall\; \mb X \in R^\perp$ is satisfied. Let $\Gamma_1$ be the unit nuclear norm ball, and $\Gamma_2$ be the unit $\ell^1$-norm ball. It is not difficult to show that $w(\Gamma_1) = O(\sqrt{m})$, while $w(\Gamma_2) = O(\sqrt{\log m})$. Set $\lambda_1 = 1$ and $\lambda_2 = \lambda$. Then $\| \mb W \|_\Gamma = \max \set{ \|\PTp \mb W \|, \lambda^{-1} \| \POc \mb W \|_\infty }$. Plugging in, we see that the dual certificate $\mb W$ promised by Theorem \ref{thm:upgrade} indeed satisfies the conditions of Proposition \ref{prop:pcp-upgrade}. 
%\end{proof}

\section{Key Probabilistic Lemmas}

This section introduces two probabilistic lemmas used in the analysis of the golfing scheme. The first lemma shows that $\PS \partop{j} \PS \approx \PS$. Its (routine) proof is delayed to the appendix. The second lemma is crucial for controlling the dual norms of $\PSp \partop{j} \PS$, and is proved in this section. 

\begin{lemma} \label{lem:op-appx} There exist numerical constants $C_1,C_2,c > 0$ such that the following holds. Let $S \subseteq \Re^{m \times n}$ be a fixed linear subspace, and let $\mc A = \sum_{j=1}^\gamma \mb H_j \< \mb H_j, \cdot \>$, where $(\mb H_j)$ is a sequence of independent iid $\mc N(0,1/mn)$ random matrices, and let $R =\mathrm{range}(\mc A) \subseteq \Re^{m \times n}$. Then if 
\begin{equation}
\gamma \;\ge\; C_1 \cdot \dim{S},
\end{equation}
with probability at least $1 - C_2 \exp\left( -c \gamma \right)$,
\begin{equation} \label{eqn:first}
\norm{ \PS \partop{} \PS - \PS }{} \;\le\; \frac{1}{2},
\end{equation}
and 
\begin{equation} \label{eqn:second}
\norm{ \PS \mc P_R \PS - \frac{\gamma}{mn} \PS }{} \;\le\; \frac{1}{16} \frac{\gamma}{mn}.
\end{equation}
\end{lemma}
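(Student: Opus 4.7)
The plan is to recognize both bounds as concentration inequalities for a Gaussian sample covariance matrix restricted to a fixed $d$-dimensional subspace, where $d = \dim{S}$. After vectorizing $\Re^{m \times n} \simeq \Re^{mn}$, each $\mb H_j$ is iid $\mc N(\mb 0,\tfrac{1}{mn}\mb I_{mn})$, so the scaled projections $\sqrt{mn}\,\PS\mb H_j$ are iid standard Gaussian on $S$. Write $\mb H$ for the $mn\times\gamma$ matrix whose columns are the vectorized $\mb H_j$'s, so $\mc A = \mb H\mb H^{*}$ and (almost surely, since $\gamma \le mn$) $\mc P_R = \mb H(\mb H^{*}\mb H)^{-1}\mb H^{*}$.

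For part (i), note that
\[
\PS\tfrac{mn}{\gamma}\mc A\PS \;=\; \tfrac{1}{\gamma}\sum_{j=1}^\gamma \bigl(\sqrt{mn}\,\PS\mb H_j\bigr)\bigl(\sqrt{mn}\,\PS\mb H_j\bigr)^{*}
\]
is the sample covariance of $\gamma$ standard Gaussian vectors on $S$, with mean $\PS$. A standard non-asymptotic Gaussian covariance bound (Davidson--Szarek style, or matrix Bernstein with sub-exponential tails) yields, for universal $C>0$ and any $t>0$,
\[
\bigl\|\PS\tfrac{mn}{\gamma}\mc A\PS - \PS\bigr\| \;\le\; C\Bigl(\sqrt{\tfrac{d+t}{\gamma}} + \tfrac{d+t}{\gamma}\Bigr)
\]
with probability at least $1-2\exp(-t)$. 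Setting $t=c\gamma$ for a small enough absolute $c$ and choosing $C_1$ so large that $\gamma \ge C_1 d$ forces this bound below $1/2$ gives \eqref{eqn:first} on an event of probability at least $1-2\exp(-c\gamma)$.

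For part (ii), I would decompose
\[
\PS\mc P_R\PS \;=\; \PS\mc A\PS \;+\; \PS\mb H\bigl[(\mb H^{*}\mb H)^{-1} - \mb I_\gamma\bigr]\mb H^{*}\PS.
\]
Enlarging $C_1$ in part (i) sends $\|\PS\mc A\PS - \tfrac{\gamma}{mn}\PS\|$ below $\tfrac{1}{32}\tfrac{\gamma}{mn}$. For the correction, $\expect{\mb H^{*}\mb H}=\mb I_\gamma$, and in the regime $\gamma\le c'\,mn$ Davidson--Szarek gives $\|\mb H^{*}\mb H-\mb I_\gamma\|\le O(\sqrt{\gamma/mn})$ and hence $\|(\mb H^{*}\mb H)^{-1}-\mb I_\gamma\|\le O(\sqrt{\gamma/mn})$ with failure probability at most $\exp(-\Omega(\gamma))$. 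Combining with $\|\PS\mb H\|^2 = \|\PS\mc A\PS\|\le\tfrac{3\gamma}{2mn}$ from part (i), the correction has operator norm $O(\tfrac{\gamma}{mn}\sqrt{\gamma/mn})$, which fits inside the $\tfrac{1}{16}\tfrac{\gamma}{mn}$ budget once $c'$ is chosen small enough. In the complementary regime $\gamma>c'\,mn$, by rotational invariance $R^{\perp}$ is a uniformly random $(mn-\gamma)$-dimensional subspace and one runs the same argument on $R^{\perp}$ via $\mc P_R = \mc I - \mc P_{R^{\perp}}$. A union bound then delivers \eqref{eqn:first} and \eqref{eqn:second} simultaneously on a single good event of probability at least $1-C_2\exp(-c\gamma)$.

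The principal obstacle is quantitative: the hypothesis $\gamma\ge C_1\dim{S}$ carries no $\log\dim{S}$ overhead, so a generic matrix Bernstein bound is too weak and one must appeal to Gaussian-specific sub-exponential concentration. Once this choice is made, the remainder reduces to routine constant-chasing, together with the split into two regimes in (ii) needed to sidestep the ill-conditioning of $\mb H^{*}\mb H$ when $\gamma$ approaches $mn$.
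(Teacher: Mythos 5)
Your argument for \eqref{eqn:first} is sound and is essentially the paper's: the paper fixes a $1/4$-net of the unit ball of $S$ and applies the Laurent--Massart $\chi^2$ tail to $\frac{mn}{\gamma}\|\mc H^*\mb X\|_2^2$ for each net point, which is exactly the standard proof of the Gaussian covariance bound $C(\sqrt{(d+t)/\gamma}+(d+t)/\gamma)$ you invoke; your observation that a generic matrix Bernstein bound would cost a spurious logarithm, and that Gaussian-specific concentration is required, is also correct.

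The proof of \eqref{eqn:second}, however, has a genuine gap. Writing $\mc P_R=\mb H(\mb H^*\mb H)^{-1}\mb H^*$ and bounding the correction by
$\|\PS\mb H\|^2\,\|(\mb H^*\mb H)^{-1}-\mb I_\gamma\|$
only works when $\gamma/mn$ is small: for $\gamma$ a constant fraction of $mn$ (say $\gamma=mn/2$), the spectrum of $\mb H^*\mb H$ spreads over an interval like $[(1-\sqrt{\gamma/mn})^2,(1+\sqrt{\gamma/mn})^2]$ by Marchenko--Pastur, so $\|(\mb H^*\mb H)^{-1}-\mb I_\gamma\|$ is a constant that can be of order $10$, while $\|\PS\mb H\|^2\approx\gamma/mn$ is also a constant; the resulting bound on the correction is $O(1)$, vastly exceeding the budget $\frac{1}{16}\frac{\gamma}{mn}$. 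Your proposed escape to the complementary regime via $\mc P_R=\mc I-\mc P_{R^\perp}$ only helps when $mn-\gamma\ll mn$, i.e.\ $\gamma\ge(1-c')mn$; the intermediate regime $c'\,mn\le\gamma\le(1-c')mn$ is covered by neither half of your case split, because there \emph{both} Gram matrices $\mb H^*\mb H$ and $(\mb H')^*\mb H'$ are ill-conditioned. The crude product-of-operator-norms bound is simply too lossy here: it ignores the cancellation coming from the fact that $S$ is only $d$-dimensional with $d\le\gamma/C_1$. The paper avoids the Gram matrix entirely: it orthonormalizes the $\mb H_j$ (via the Bartlett decomposition, so that the resulting frame for $R$ is Haar-distributed on the Stiefel manifold) and then, for each net point $\mb X$, observes that $\|\psi^*\mb X\|_2^2$ is distributed as the squared norm of the first $\gamma$ coordinates of a uniformly random unit vector in $\bb S^{mn-1}$; the Dasgupta--Gupta tail bound then gives $\prob{|\frac{mn}{\gamma}\|\psi^*\mb X\|_2^2-1|>t}\le\exp(-c_t\gamma)$ uniformly in the aspect ratio $\gamma/mn$, and the same union bound over the net finishes the proof. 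To repair your argument you would either need this direct treatment of the orthonormalized frame or a much finer analysis of $\PS\mb H[(\mb H^*\mb H)^{-1}-\mb I_\gamma]\mb H^*\PS$ that exploits $\dim{S}\le\gamma/C_1$.
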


The proof of this result follows a familiar covering argument. For completeness, we give this proof in Appendix \ref{app:op-appx}. We next state and prove the key probabilistic lemma for analyzing the ``upgrade'' procedure introduced in the proof of Theorem \ref{thm:upgrade}. This lemma allows us to control the dual norm of the constructed certificate. The result is as follows:

\begin{lemma} \label{lem:skew-infty} Let $S$ be any fixed subspace of $\Re^{m \times n}$ ($m \ge n$), $\mb M$ any fixed matrix. Let $\mc A = \sum_{l=1}^\gamma \mb H_l \< \mb H_l, \cdot \>$ be a random semidefinite operator constructed from a sequence of independent iid $\mc N(0,1/mn)$ matrices $\mb H_1, \dots, \mb H_\gamma$. Let $\| \cdot \|$ be any norm that majorizes the Frobenius norm, and let $\| \cdot \|^*$ be its dual norm. Set $\nu = \expect{\, \| \mb G \|^* }$, with $\mb G$ iid $\normal{0}{1}$. Then we have
\begin{equation}
\norm{\PSp \partop{} \PS \mb M}{}^* \;\le\; 10 \norm{\PS \mb M}{F} \frac{\nu + \sqrt{\log m}}{\sqrt{\gamma}},
\end{equation} 
with probability at least $1 - m^{-10} - \expfrac{\gamma}{2}$.
\end{lemma}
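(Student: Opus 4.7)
The plan is to exploit two independence properties of the iid Gaussian matrices $\mb H_l$, which together reduce the target quantity to a rescaled Gaussian matrix. Setting $\mb N = \PS \mb M \in S$, note that $\<\mb H_l, \mb N\> = \<\PS \mb H_l, \mb N\>$ is measurable with respect to $\PS \mb H_l$, hence probabilistically independent of $\PSp \mb H_l$. Accordingly, write $\<\mb H_l, \mb N\> = (mn)^{-1/2} \norm{\mb N}{F}\, g_l$ and $\PSp \mb H_l = (mn)^{-1/2} \PSp \mb G_l$, where the scalars $g_l \sim \mc N(0,1)$ are iid and the $\mb G_l$ are iid $\mc N(0,1)$ matrices independent of the $g_l$. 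Substituting,
\begin{equation*}
\PSp \partop{} \PS \mb M \;=\; \frac{\norm{\mb N}{F}}{\gamma} \sum_{l=1}^\gamma g_l\, \PSp \mb G_l .
\end{equation*}

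Next I would condition on $\mb g = (g_1, \dots, g_\gamma)$. Given $\mb g$, the weighted sum $\sum_l g_l \mb G_l$ has iid Gaussian entries of variance $\sigma^2 := \norm{\mb g}{2}^2$, so $\sum_l g_l \PSp \mb G_l \stackrel{d}{=} \sigma\, \PSp \mb G$ for a single fresh iid $\mc N(0,1)$ matrix $\mb G$. This yields the clean identity
\begin{equation*}
\norm{\PSp \partop{} \PS \mb M}{}^* \;\stackrel{d}{=}\; \frac{\norm{\mb N}{F}\, \sigma}{\gamma}\, \norm{\PSp \mb G}{}^*.
\end{equation*}
A standard chi-squared tail bound (e.g.\ Laurent--Massart) gives $\sigma \le C\sqrt{\gamma}$ on an event of probability at least $1 - \exp(-\gamma/2)$; this supplies the $\exp(-\gamma/2)$ term in the conclusion and contributes the $\gamma^{-1/2}$ scaling.

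The main obstacle is to control $\norm{\PSp \mb G}{}^*$ in terms of $\nu = \E \norm{\mb G}{}^*$, since without any decomposability assumption the projection $\PSp$ need not be non-expansive in the dual norm. The key trick is a sign-symmetrization: by independence of $\PS \mb G$ and $\PSp \mb G$ together with Gaussian sign-symmetry, the matrix $\mb G' := -\PS \mb G + \PSp \mb G$ has the same distribution as $\mb G$. Since $\mb G + \mb G' = 2 \PSp \mb G$, the triangle inequality yields $2\, \E \norm{\PSp \mb G}{}^* \le \E \norm{\mb G}{}^* + \E \norm{\mb G'}{}^* = 2\nu$. Second, because $\norm{\cdot}{}$ majorizes the Frobenius norm its dual is majorized by the Frobenius norm, so the map $\mb Z \mapsto \norm{\PSp \mb Z}{}^*$ is $1$-Lipschitz in Frobenius norm. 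Gaussian concentration then gives $\P[ \norm{\PSp \mb G}{}^* \ge \nu + t ] \le \exp(-t^2/2)$, and choosing $t = \sqrt{20 \log m}$ produces $\norm{\PSp \mb G}{}^* \le \nu + \sqrt{20 \log m}$ with probability at least $1 - m^{-10}$. Combining this with the $\sigma$ bound and absorbing numerical constants into the factor $10$ yields the stated inequality on a good event of probability at least $1 - m^{-10} - \exp(-\gamma/2)$.
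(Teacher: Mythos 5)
Your proof is correct, and while it shares the overall skeleton of the paper's argument---split $\mb H_l$ into its independent components $\PS \mb H_l$ and $\PSp \mb H_l$, condition on the ``coefficient'' part, and pay for two concentration events ($e^{-\gamma/2}$ for the coefficient norm, $m^{-10}$ for the dual norm)---it replaces the paper's central tool with a more elementary one. The paper views $\sup_{\mb N \in \Gamma} \innerprod{\mb N}{\mc D \mb M}$ as a Gaussian process indexed by the dual-norm ball, bounds its increments by those of the canonical process $\zeta_{\mb N} = \innerprod{\mb N}{\mb G}$, and invokes Slepian's (Sudakov--Fernique) inequality to get $\E \sup \xi_{\mb N} \le \Xi \nu/\sqrt{mn}$. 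You instead observe the stronger fact that, conditionally on $\mb g$, the matrix $\sum_l g_l \PSp \mb G_l$ is \emph{exactly} equal in distribution to $\sigma \PSp \mb G$ (your $\sigma$ is, up to scaling, the paper's $\Xi$), so no comparison inequality is needed; the price is the extra step of relating $\E \norm{\PSp \mb G}{}^*$ to $\nu = \E\norm{\mb G}{}^*$, which your sign-symmetrization $\mb G' = -\PS\mb G + \PSp \mb G \eqdist \mb G$, $2\PSp \mb G = \mb G + \mb G'$, handles cleanly (the paper absorbs this point into the increment bound $\norm{\PSp(\mb N - \mb N')}{F} \le \norm{\mb N - \mb N'}{F}$ before applying Slepian). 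Your route is arguably simpler and self-contained; the paper's process-comparison argument is more robust in that it would survive situations where the conditional law is not exactly a single rescaled Gaussian matrix. The remaining steps---the $\chi^2_\gamma$ tail for $\sigma$, the $1$-Lipschitz property of $\mb Z \mapsto \norm{\PSp \mb Z}{}^*$ in Frobenius norm (using that the dual norm is minorized by $\norm{\cdot}{F}$), the choice $t = \sqrt{20\log m}$, and the constant accounting $2\sqrt{20} < 10$---all check out.
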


\begin{proof}

Let $\Gamma \subseteq \{ \mb N \mid \| \mb N \|_F \le 1 \}$ denote the unit ball for $\| \cdot\|$. Then 
 $$\| \PSp \mc A \PS \mb M\|_{}^* = \sup_{\mb N \in \Gamma} \innerprod{ \mb N }{ \PSp \mc A \PS \mb M}.$$ Inner products of this form are particularly easy to control because they involve projections of $\mc A$ onto orthogonal subspaces. Since $S$ and $S^\perp$ are orthogonal and $\mb H_l$ is iid Gaussian, $\PS \mb H_l$ and $\PSp \mb H_l$ are probabilistically independent. So, letting $\mb H'_1, \dots, \mb H'_\gamma$ denote an independent copy of $\mb H_1, \dots, \mb H_\gamma$, we have
\begin{eqnarray}
\PSp \mc A \PS [\,\cdot\,] &=& \PSp \sum_l \mb H_l \innerprod{ \mb H_l }{ \PS[\,\cdot\,] } \nonumber \\
&=& \sum_l \PSp[\mb H_l] \innerprod{ \PS \mb H_l }{\,\cdot\, } \nonumber \\
&\eqdist& \sum_l \PSp[\mb H_l] \innerprod{\PS \mb H_l'}{\, \cdot \,} \quad\doteq\quad \mc D, 
\end{eqnarray}
where $\eqdist$ denotes equality in distribution. Hence, we have 
$$\|\PSp \mc A \PS \mb M\|^* \eqdist \|\mc D \mb M\|^*.$$ 
We find the second term more convenient to analyze. Conditioned on $\mb H'_1, \dots, \mb H'_\gamma$, 
\begin{eqnarray}
\xi_{\mb N} \;\doteq\; \innerprod{\mb N}{ \mc D \mb M } \;= \sum_l \innerprod{ \PS \mb H_l' }{  \mb M  } \innerprod{ \PSp \mb N }{ \mb H_l  } 
\end{eqnarray}
is zero-mean Gaussian. We have $\| \mc D \mb M \|^* = \sup_{\mb N \in \Gamma} \xi_{\mb N}$. A quick calculation shows that for any $\mb N$ and $\mb N'$, 
\begin{eqnarray}
\condexp{ (\xi_{\mb N} - \xi_{\mb N'})^2 }{\mb H_1', \dots, \mb H_\gamma'} &=& \frac{\norm{\PSp (\mb N - \mb N')}{F}^2}{mn} \sum_{l=1}^\gamma \innerprod{\PS \mb H_l'}{\mb M}^2 \nonumber \\
&\le& \frac{\|\mb N - \mb N'\|_F^2}{mn}\sum_{l=1}^\gamma \innerprod{\PS \mb H_l'}{\mb M}^2 \;=\; \frac{\Xi^2 \| \mb N - \mb N'\|_F^2}{mn}, 
\end{eqnarray}
where we have let 
\begin{equation}
\Xi \;=\; \skewstdev
\end{equation}
Consider a second zero-mean Gaussian process $(\zeta_{\mb N})_{\mb N \in \Gamma}$, defined by letting $\mb G$ be an iid $\normal{0}{1/mn}$ matrix, and setting $\zeta_{\mb N} \;=\; \< \mb N, \mb G \>$. From the definition of $\nu$, 
\begin{equation}
\expect{ \sup_{\mb N \in \Gamma} \zeta_{\mb N} } = \frac{\nu}{\sqrt{mn}}.
\end{equation}
Another calculation shows that 
\begin{equation}
\expect{ (\zeta_{\mb N} - \zeta_{\mb N'})^2 } = \frac{\| \mb N - \mb N' \|_F^2}{mn}.
\end{equation}
By Slepian's inequality (e.g., \cite{Vershynin2011} Lemma 5.33, \cite{Ledoux-Talagrand} Chapter 3), we have 
\begin{equation}
\expect{ \sup_{\mb N} \, \xi_{\mb N} \mid \mb H'_1, \dots, \mb H'_\gamma } \;\le \;\Xi \cdot \expect{ \sup_{\mb N} \, \zeta_{\mb N} } \;=\; \frac{\nu \, \Xi}{\sqrt{mn}}. 
\end{equation}
Moreover, for any fixed values of $\mb H_1', \dots, \mb H_l'$, and any $\mb N \in \Gamma$,  $\xi_{\mb N}$ is a $\Xi$-Lipschitz function of the iid Gaussian sequence $(\mb H_1, \dots, \mb H_l)$. Hence, the supremum $\| \mc D \mb M \|^*$ is also $\Xi$-Lipschitz. By Lipschitz concentration (\cite{Ledoux} Proposition 2.18), 
\begin{equation}
\condprob{ \| \mc D \mb M \|^* > \condexp{ \| \mc D \mb M \|^* }{(\mb H'_l)} + \frac{t \, \Xi}{\sqrt{mn}} }{(\mb H'_l)} \le \expfrac{t^2}{2}.
\end{equation}
Combining with our previous estimates, we have 
\begin{equation} 
\condprob{ \| \mc D \mb M \|^* > \frac{ \Xi \, ( \nu + t ) }{\sqrt{mn}} }{(\mb H'_l)} \le \expfrac{t^2}{2}.
\end{equation}
Since this estimate holds for any value of $(\mb H'_l)$, it holds unconditionally:
\begin{equation} \label{eqn:DM-bound}
\prob{ \| \mc D \mb M \|^* > \frac{ \Xi \, ( \nu + t ) }{\sqrt{mn}} } \le \expfrac{t^2}{2}.
\end{equation}
Moreover, it is easy to notice that 
 $\Xi$ is itself a $\norm{ \PS \mb M  }{F}$-Lipschitz function of the iid Gaussian sequence $(\mb H'_1, \dots, \mb H'_\gamma)$,
 with  
\begin{equation}
\expect{\Xi} \le \left(\expect{\Xi^2}\right)^{1/2} \;=\; \norm{ \PS \mb M }{F} \sqrt{\frac{\gamma}{ mn } }.
\end{equation}
From Lipschitz concentration,
\begin{equation}
\prob{\Xi > \expect{\Xi} + s \norm{ \PS \mb M }{F} } \;\le\; \exp\left( -\frac{s^2 mn}{ 2} \right), 
\end{equation}
and so 
\begin{equation} \label{eqn:skew-st-dev}
\prob{ \Xi \;>\; 2 \norm{ \PS \mb M }{F} \sqrt{\frac{\gamma}{mn}} 
} \le \expfrac{\gamma}{2}.
\end{equation}
Combining this estimate with \eqref{eqn:DM-bound}, setting $t = \sqrt{20 \log m}$, and rescaling by $\tfrac{mn}{\gamma}$, we obtain the result.
\end{proof}

\section{Proof of Lemma \ref{lem:sc-relaxed}: Upgrade to Exact Certificate} \label{sec:inex-ex-upgrade}

In this section, we prove Lemma \ref{lem:sc-relaxed}, which shows how an inexact dual certificate can be upgraded to an exact certificate of optimality. The result will be a consequence of the following general lemma on systems of equations. 

\begin{lemma} Let $T_1, \dots , T_k$ be independent subspaces of $\Re^{m \times n}$, and $\mb S_1 \in T_1, \dots, \mb S_k \in T_k$. Then the system of equations 
\begin{equation} \label{eqn:mtx-system}
\mc P_{T_i} \mb X = \mb S_i, \; i = 1, \dots, k,
\end{equation}
has a solution $\mb X \in T_1 + \dots + T_k$ satisfying 
\begin{equation}
\|\mb X \|_F \;\le\; \sqrt{\frac{ \sum_i \| \mb S_i \|_F^2 }{ 1 - (k-1) \max_{i \ne j} \| \mc P_{T_i} \mc P_{T_j} \|}}. 
\end{equation}
\end{lemma}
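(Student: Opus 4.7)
The plan is to reformulate the system as a linear operator equation on the orthogonal direct sum $H = T_1 \oplus \cdots \oplus T_k$ (with inner product $\langle \mb Y, \mb Z\rangle = \sum_i \langle \mb Y_i, \mb Z_i\rangle$) and exploit a convenient identity between the associated quadratic form and the squared Frobenius norm of the candidate solution. Concretely, since any $\mb X \in T_1 + \cdots + T_k$ admits a (unique, by independence) decomposition $\mb X = \sum_j \mb Y_j$ with $\mb Y_j \in T_j$, the constraints $\mc P_{T_i}\mb X = \mb S_i$ become $\mc L \mb Y = \mb S$, where $\mb S = (\mb S_1, \dots, \mb S_k) \in H$ and $\mc L : H \to H$ acts by $(\mc L \mb Y)_i = \mb Y_i + \sum_{j \ne i}\mc P_{T_i}\mb Y_j$. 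Write $\mc L = \mc I_H + \mc A$ with the off-diagonal part $(\mc A\mb Y)_i = \sum_{j\ne i}\mc P_{T_i}\mb Y_j$.

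The key observation, which I would prove first, is the identity
\begin{equation*}
\langle \mb Y, \mc L \mb Y\rangle \;=\; \Bigl\| \sum_i \mb Y_i \Bigr\|_F^2
\end{equation*}
for every $\mb Y \in H$. This follows by expanding $\|\sum_i \mb Y_i\|_F^2 = \sum_i \|\mb Y_i\|_F^2 + \sum_{i\ne j}\langle \mb Y_i, \mb Y_j\rangle$ and using $\langle \mb Y_i, \mb Y_j\rangle = \langle \mb Y_i, \mc P_{T_i}\mb Y_j\rangle$. Combined with the independence of $T_1,\dots,T_k$, this forces $\mc L$ to be strictly positive definite and hence invertible, so a unique $\mb Y = \mc L^{-1}\mb S$ exists and determines $\mb X = \sum_j \mb Y_j$.

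Next I would bound $\|\mc A\|$. Setting $\mu = \max_{i\ne j}\|\mc P_{T_i}\mc P_{T_j}\|$ and using $\mb Y_j \in T_j$ (so $\mc P_{T_i}\mb Y_j = \mc P_{T_i}\mc P_{T_j}\mb Y_j$), the triangle inequality followed by Cauchy--Schwarz yields
\begin{equation*}
\|\mc A\mb Y\|_H^2 \;=\; \sum_i \Bigl\|\sum_{j\ne i}\mc P_{T_i}\mb Y_j\Bigr\|_F^2 \;\le\; \mu^2 \sum_i\Bigl(\sum_{j\ne i}\|\mb Y_j\|_F\Bigr)^2 \;\le\; \mu^2 (k-1)^2 \|\mb Y\|_H^2,
\end{equation*}
so $\|\mc A\| \le (k-1)\mu$ and therefore $\mc L \succeq (1 - (k-1)\mu)\mc I_H$ when the denominator is positive (otherwise the claimed bound is vacuous).

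Finally, I would combine the identity and the spectral bound via
\begin{equation*}
\|\mb X\|_F^2 \;=\; \Bigl\|\sum_i \mb Y_i\Bigr\|_F^2 \;=\; \langle \mb Y, \mc L \mb Y\rangle \;=\; \langle \mb Y, \mb S\rangle \;\le\; \|\mb Y\|_H \, \|\mb S\|_H \;\le\; \frac{\|\mb S\|_H^2}{1-(k-1)\mu},
\end{equation*}
where the last step uses $\|\mb Y\|_H \le \|\mc L^{-1}\|\|\mb S\|_H \le \|\mb S\|_H/(1-(k-1)\mu)$, and $\|\mb S\|_H^2 = \sum_i \|\mb S_i\|_F^2$. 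Taking square roots gives the claimed inequality. The only conceptual hurdle is spotting the quadratic-form identity $\langle \mb Y, \mc L \mb Y\rangle = \|\sum_i \mb Y_i\|_F^2$; without it, the naive estimate $\|\mb X\|_F \le \sum_j \|\mb Y_j\|_F$ loses a $\sqrt{k}$ factor and yields only the weaker bound $\|\mb S\|_H/(1-(k-1)\mu)$ rather than the sharper $\|\mb S\|_H/\sqrt{1-(k-1)\mu}$ promised in the statement.
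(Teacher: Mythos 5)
Your proposal is correct and is essentially the paper's argument in coordinate-free form: your operator $\mc L$ on $T_1\oplus\cdots\oplus T_k$ is exactly the Gram matrix $\mb U^*\mb U$ that the paper forms after vectorizing and stacking orthonormal bases of the $T_i$, and your quadratic-form identity $\langle \mb Y, \mc L\mb Y\rangle = \|\sum_i \mb Y_i\|_F^2$ is the reason the paper's bound $\|\mb x\|_2 \le \|\mb s\|_2/\sigma_{\min}(\mb U)$ carries the square root. The only cosmetic difference is how the spectral bound is obtained -- you bound the off-diagonal part by $\|\mc A\|\le (k-1)\max_{i\ne j}\|\mc P_{T_i}\mc P_{T_j}\|$ via Cauchy--Schwarz, while the paper runs a block-Gershgorin argument on an eigenvector of $\mb U^*\mb U$ -- and both yield the identical estimate $\lambda_{\min}\ge 1-(k-1)\max_{i\ne j}\|\mc P_{T_i}\mc P_{T_j}\|$.
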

\begin{proof} Let $\mathrm{vec} : \Re^{m\times n} \to \Re^{mn}$ denote the operator that vectorizes a matrix by stacking its columns. For each $i$, let $\mb U_i \in \Re^{mn \times \dim{T_i}}$ denote a matrix whose columns form an orthonormal basis for $\vec{T_i}$. The system of equations is equivalent to 
\begin{equation}
\left[ \begin{array}{c} \mb U_1^* \\ \vdots \\ \mb U_k^* \end{array} \right] \mb x = \left[ \begin{array}{c} \mb U_1^* \cdot \vec{\mb S_1} \\ \vdots \\ \mb U_k^* \cdot \vec{\mb S_k} \end{array} \right],
\end{equation}
where $\mb x = \vec{\mb X}$. Let $\mb U^*$ denote the matrix on the left hand side, and $\mb s$ denote the vector on the right hand side. Then $\| \mb s \|_2^2 = \sum_{i=1}^k \| \mb S_i \|_F^2$. The system of equations has a solution $\mb x \in \mathrm{range}(\mb U) = \vec{T_1 + \dots + T_k}$ with $\ell^2$ norm at most $\| \mb s \|_2 / \sigma_{\min}(\mb U)$, and hence \eqref{eqn:mtx-system} has a solution whose Frobenius norm is bounded by the same quantity. Write
\begin{equation}
\mb U^* \mb U = \left[ \begin{array}{cccc} \mb I & \mb U_1^* \mb U_2 & \dots & \mb U_1^* \mb U_k \\ \mb U_2^* \mb U_1 & \mb I & \dots & \mb U_2^* \mb U_k \\ 
\vdots & \vdots & \ddots & \vdots \\
\mb U_k^* \mb U_1 & \mb U_k^* \mb U_2 & \dots & \mb I \end{array} \right]. %\;\doteq\; \mb I + \mb \Delta,
\end{equation}
Let $\lambda$ be any eigenvalue of $\mb U^* \mb U$, with corresponding eigenvector $\mb x = (\mb x_1^*, \;\mb x_2^*,\; \dots, \mb x_k^*)^*$. Let $p = \arg \max_j \| \mb x_j\|_2$. Then, looking at just the $p$-th block of the equation $\lambda \mb x = \mb U \mb U^* \mb x$, we have
\begin{eqnarray}
|\lambda - 1| \| \mb x_p \|_2 &=& \Bigl\| \sum_{j \ne p} \mb U_p^* \mb U_j \mb x_j \Bigr\|_2 \nonumber \\ &\le& \sum_{j \ne p} \| \mb U_p^* \mb U_j \| \| \mb x_j \|_2 \nonumber \\ &\le& \| \mb x_p \|_2 \times (k-1) \max_{i \ne j} \| \mb U_i^* \mb U_j \|.
\end{eqnarray}
Since $\| \mb U_i^* \mb U_j \| = \| \mc P_{T_i} \mc P_{T_j} \|$, we conclude that $\lambda_{\min}(\mb U^* \mb U) \ge 1 - (k-1) \max_{i \ne j} \| \mc P_{T_i} \mc P_{T_j} \|$, and hence $\sigma_{\min}(\mb U)$ is at least as large as the square root of this quantity. This establishes the result. 
\end{proof}

\begin{proof}[\bf \em Proof of Lemma \ref{lem:sc-relaxed}]
The assumption implies that $T_1, \dots, T_\tau$ are independent subspaces, and so the system of equations 
\begin{equation}
\mc P_{T_i} \mb \Delta = \lambda_i \mb S_i - \mc P_{T_i} \Lh, \quad  i = 1, \dots, \tau
\end{equation}
is feasible, and has a solution $\mb \Delta_0 \in T_1 + \dots + T_\tau$ of Frobenius norm at most 
\begin{equation}
\| \mb \Delta_0 \|_F \;\le\; \sqrt{\frac{\sum_i \|\lambda_i \mb S_i - \mc P_{T_i} \Lh \|_F^2 }{1 - (\tau-1) \max_{i \ne j} \| \mc P_{T_i} \mc P_{T_j} \|}} \; \le \; \sqrt{\frac{\alpha^2 \tau}{1 - (\tau-1) \max_{i \ne j} \| \mc P_{T_i} \mc P_{T_j} \|} }.
\end{equation}
Moreover, since $T_1 + \dots + T_\tau$ and $Q^\perp$ are independent, the system of equations 
\begin{equation}
\mc P_{T_1 + \dots + T_\tau} \mb \Delta = \mb \Delta_0, \; \PQp \mb \Delta = \mb 0
\end{equation}
is feasible (indeed, underdetermined). We consider a solution $\mb \Delta_\star$ of minimum Frobenius norm. Under the stated hypotheses, this solution is given by the Neumann series 
\begin{equation}
\mb \Delta_\star = \PQ \sum_{i = 0}^{\infty} ( \mc P_{T_1 + \dots + T_\tau} \PQp \mc P_{T_1 + \dots + T_\tau} )^i \mb \Delta_0,
\end{equation}
whose norm is bounded as 
\begin{equation}
\| \mb \Delta_\star \|_F \;\le\; \frac{\| \mb \Delta_0 \|_F}{1 - \| \mc P_{T_1 + \dots + T_\tau} \PQp \|^2}.
\end{equation}
Set $\mb \Lambda = \Lh + \mb \Delta_\star$, and observe that by construction, for each $i$, $\PTi \mb \Lambda = \lambda_i \mb S_i$. For each $i$, we have
\begin{equation}
\| \PTip \cert{} \|_{(i)}^* \quad\le\quad  \| \PTip \Lh \|_{(i)}^* \,+\, \| \PTip \mb \Delta_\star \|_{(i)}^*. 
\end{equation}
Because $\| \cdot \|_{(i)}$ majorizes the Frobenius norm, its dual minorizes the Frobenius norm, and so we have 
\begin{equation}
\lambda_i^{-1} \| \PTip \cert{} \|_{(i)}^* \quad\le\quad \lambda_i^{-1} \| \PTip \Lh \|_{(i)}^* \,+\, \lambda_{i}^{-1} \| \PTip \mb \Delta_\star \|_F.
\end{equation}
Under the stated hypotheses, this quantity is strictly smaller than one, and so $\cert{}$ satisfies the conditions of Lemma \ref{lem:optimality}.
\end{proof}

\section{Proof of Theorem \ref{thm:main}: Compressive PCP Recovery}

In this section, we prove Theorem \ref{thm:main}, using the general upgrade provided by Theorem \ref{thm:upgrade}. In the language of Section \ref{sec:upgrade}, we have $\| \cdot \|_{(1)} = \| \cdot \|_*$, $\| \cdot \|_{(2)} = \| \cdot \|_1$. Both of these norms majorize the Frobenius norm. For PCP, we take $\lambda_1 = 1$, $\lambda_2 = 1/ \sqrt{m}$.

Let $\mb L_0,\mb S_0$ denote the target pair, and $r$ the rank of $\mb L_0$. If we let $\mb L_0 = \mb U \mb S \mb V^*$ denote the rank-reduced singular value decomposition of $\mb L_0$, and $T$ denote the subspace
\begin{equation}
T \doteq \set{ \mb U \mb X^* + \mb Y \mb V^* \mid \mb X \in \Re^{n \times r}, \; \mb Y \in \Re^{m \times r} },
\end{equation}
then the subdifferential of the nuclear norm at $\mb L_0$ is 
\begin{equation}
\partial \| \cdot \|_*(\mb L_0) \;=\; \set{ \mb \Lambda \mid \PT \mb \Lambda = \mb U \mb V^*, \; \|\PTp \mb \Lambda \| \le 1 }.
\end{equation}
It is easy to check that $\PTp : \mb M \mapsto (\mb I - \mb U \mb U^*) \mb M (\mb I - \mb V \mb V^*)$ is nonexpansive with respect to the operator norm $\| \cdot \|$, and so $\| \cdot \|_*$ indeed satisfies our criteria for a decomposable norm. Similarly, if we let $\Omega = \mathrm{supp}(\mb S_0)$ and $\mb \Sigma = \mathrm{sign}(\mb S_0)$, then 
\begin{equation}
\partial \| \cdot \|_1(\mb S_0) \;=\; \set{ \mb \Lambda \mid \PO \mb \Lambda = \mb \Sigma, \; \| \POc \mb \Lambda \|_\infty \le 1 }.
\end{equation}
Again, $\POc$ does not increase the $\ell^\infty$ norm, and $\| \cdot \|_1$ is also decomposable. In the language of Theorem \ref{thm:upgrade}, we have $T_1 = T$, $\mb S_1 = \mb U \mb V^*$, $T_2 = \Omega$, $\mb S_2 = \mb \Sigma$. 
For the PCP problem, an $(\alpha,\beta)$-inexact certificate is therefore a matrix $\Lpcp$ satisfying 
\begin{eqnarray*}
\| \PT \Lpcp - \mb U\mb V^* \|_F &\le& \alpha, \\
\| \PO \Lpcp - \lambda \mb \Sigma \|_F &\le& \alpha, \\
\| \PTp \Lpcp \| &\le& \beta, \\
\| \POc \Lpcp \|_\infty &\le& \beta \lambda.
\end{eqnarray*}
Such a certificate was constructed in \cite{Candes2011-JACM},\footnote{In the notation of \cite{Candes2011-JACM}, the certificate constructed there is $\Lpcp = \mb U \mb V^* + \mb W^L + \mb W^S$.} under the hypotheses of Theorem \ref{thm:main}. More precisely, we have the following:

\begin{proposition}[\bf Dual Certification for PCP \cite{Candes2011-JACM}] \label{thm:clmw} Under the conditions of Theorem \ref{thm:main}, on an event of probability at least $1 - C m^{-10}$ the following hold:
\begin{equation} \label{eqn:popt-bound}
\text{(i)} \qquad \norm{\PO\PT}{} \;\le\; 1/2,
\end{equation}
and (ii) there exists a $(m^{-2},1/4)$-inexact PCP certificate $\Lpcp$ for $(\mb L_0,\mb S_0)$, which satisfies
\begin{equation} \label{eqn:Wh-bound}
\|\Lpcp\|_F \;\le\; 4 \sqrt{\mathrm{rank}(\mb L_0)} + (4/3) \lambda \sqrt{\|\mb S_0\|_0}.
\end{equation}
Above, $C$ is numerical.
\end{proposition}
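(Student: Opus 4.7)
The plan is to import the dual certificate construction from \cite{Candes2011-JACM} essentially verbatim, and then verify that it fits into the inexact-certificate framework of Definition \ref{def:dec-cert} with the parameters claimed, while additionally tracking the Frobenius norm of the certificate. I will work throughout on the high-probability event on which the standard incoherence-based concentration bounds for a random $\mu$-incoherent subspace pair $(T,\Omega)$ hold; under the assumptions $r \le c_r n / (\mu \log^2 m)$ and $\rho \le c_\rho$, these bounds give $\norm{\mc P_\Omega \mc P_T}{} \le 1/2$ by a direct application of the operator-norm concentration inequality for $\mc P_T \mc P_\Omega \mc P_T$, which establishes part (i).

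For part (ii), I will write $\Lpcp = \mb U \mb V^* + \mb W^L + \mb W^S$. The piece $\mb W^L$ is built by the golfing scheme: partition a Bernoulli set $\Omega^c$ into $j_0 = \lceil \log m \rceil$ independent Bernoulli layers of density $q$, initialize $\mb Y_0 = \mb U \mb V^*$, and iterate $\mb Y_j = (\mc P_T - q^{-1} \mc P_T \mc P_{\Omega^c_j} \mc P_T)\mb Y_{j-1}$, setting $\mb W^L = q^{-1}\sum_j \mc P_{\Omega^c_j} \mb Y_{j-1}$. Under the stated $(r,\rho)$ regime, the contraction $\norm{\mc P_T - q^{-1}\mc P_T\mc P_{\Omega^c_j}\mc P_T}{} \le 1/2$ holds with high probability at each layer, which yields $\norm{\mc P_T \mb Y_{j_0}}{F} \le 2^{-j_0}\sqrt{r}$; then $\mc P_T \Lpcp - \mb U\mb V^* = -\mc P_T \mb Y_{j_0}$ gives the first $\alpha$-type bound since $2^{-j_0}\sqrt{r} \le m^{-2}$ by our choice of $j_0$. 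The spectral-norm bound $\norm{\mc P_{T^\perp} \mb W^L}{} \le 1/8$ follows from layer-wise matrix Bernstein estimates on each $q^{-1}\mc P_{\Omega^c_j}\mb Y_{j-1}$. The piece $\mb W^S$ is defined as $\lambda \mc P_T^\perp(\mc P_\Omega - \mc P_\Omega\mc P_T\mc P_\Omega)^{-1}\mb \Sigma$ expanded via its Neumann series in $\mc P_\Omega \mc P_T \mc P_\Omega$; the series converges because of part (i), and by construction $\mc P_\Omega \mb W^S = \lambda \mb \Sigma$, matching the $\mb S_2$-constraint exactly (so $\alpha = 0$ from this piece). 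The $\ell^\infty$ bound $\norm{\mc P_{\Omega^c} \mb W^S}{\infty} \le \lambda/8$ is handled as in \cite{Candes2011-JACM} using the randomness of $\mb \Sigma$ and a Hoeffding-type estimate. Combining the two spectral and $\ell^\infty$ bounds with $\norm{\mc P_{T^\perp}(\mb U\mb V^*)}{} = 0$ and the analogous trivial bound on $\mc P_{\Omega^c}(\mb U\mb V^*)$ (which is $O(\sqrt{\mu r/n})\le 1/8$ under the rank assumption) yields $\beta = 1/4$.

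For the Frobenius norm bound \eqref{eqn:Wh-bound}, I will bound each summand separately. For $\mb U\mb V^*$, $\norm{\mb U\mb V^*}{F} = \sqrt{r}$. For $\mb W^L$, the golfing identity $\mb W^L = \mb U\mb V^* - \mb Y_{j_0} - \sum_j (\mc P_T - q^{-1}\mc P_T\mc P_{\Omega^c_j}\mc P_T)\mb Y_{j-1}$ combined with the telescoping contraction estimate and the initial norm $\sqrt{r}$ yields $\norm{\mb W^L}{F} \lesssim \sqrt{r}$. For $\mb W^S$, the Neumann expansion and the bound $\norm{\mc P_\Omega\mc P_T\mc P_\Omega}{} \le 1/4$ give $\norm{\mb W^S}{F} \le (1 - 1/4)^{-1} \lambda \norm{\mb \Sigma}{F} = (4/3)\lambda \sqrt{\norm{\mb S_0}{0}}$. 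Summing these three bounds with slack for constants yields \eqref{eqn:Wh-bound}.

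The main obstacle, and the reason I am essentially reusing the argument of \cite{Candes2011-JACM}, is that the spectral-norm control on $\mc P_{T^\perp}\mb W^L$ and the $\ell^\infty$ control on $\mc P_{\Omega^c}\mb W^S$ each require fairly delicate matrix concentration inequalities with the right logarithmic factors, tuned to the incoherence parameter $\mu$ and the sparsity density $\rho$. All of the new content here (compared to that paper) is the trivial addition of an explicit Frobenius-norm estimate for the constructed certificate, which just requires collecting the norm bounds already produced along the way. The statement then follows by choosing constants $c_r, c_\rho$ small enough that all the contraction and concentration constants in the golfing and Neumann analyses are in force.
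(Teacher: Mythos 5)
Your overall route is the same as the paper's: part (i) via operator-norm concentration for $\PO\PT$ under the incoherence and density assumptions, and part (ii) via the certificate $\Lpcp = \UVt + \mb W^L + \mb W^S$ with $\mb W^L$ from the golfing scheme and $\mb W^S$ from the Neumann series, followed by a Frobenius-norm accounting. The $(\alpha,\beta)$ verification and the bound $\|\mb W^S\|_F \le (4/3)\lambda\sqrt{\|\mb S_0\|_0}$ match the paper's argument. (One parameter slip: with $j_0 = \lceil \log m\rceil$ you only get $2^{-j_0} \approx m^{-\log 2}$; you need $j_0 = \lceil 3\log_2 m\rceil$ or similar so that $2^{-j_0}\sqrt{r} \le m^{-2}$.)

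The genuine gap is the bound $\|\mb W^L\|_F \lesssim \sqrt{r}$, which you describe as a trivial consequence of ``the telescoping contraction estimate.'' It is not. Writing $\mb Z_{j-1}$ for the error in $T$ at layer $j$, the telescoping identity $\sum_j q^{-1}\PT\mc P_{\Upsilon_j}\mb Z_{j-1} = \mb Z_0 - \mb Z_{j_0}$ controls only the $T$-component of the golfing sum, whereas $\mb W^L = \PTp \sum_j q^{-1}\mc P_{\Upsilon_j}\mb Z_{j-1}$ is precisely the $T^\perp$-component, which does not telescope. Bounding it term by term, one has $\E\|q^{-1}\mc P_{\Upsilon_j}\mb Z\|_F^2 = q^{-1}\|\mb Z\|_F^2$ for fixed $\mb Z$, and $q \asymp 1/\log m$, so the straightforward estimate yields only $\|\mb W^L\|_F = O(\sqrt{r\log m})$ --- off by exactly the logarithmic factor that the stated bound $3\sqrt{r}$ is designed to avoid, and whose loss would propagate into the sample-complexity bound of Theorem \ref{thm:main}. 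The paper closes this gap with a dedicated Bernstein concentration argument (paragraph (iii) of its proof): exploiting the independence of $\Upsilon_j$ from $\mb Z_{j-1}$ together with the incoherence-driven smallness of $\|\mb Z_{j-1}\|_\infty \le 2^{-(j-1)}\sqrt{\mu r / mn}$, it shows $\|q^{-1}\mc P_{\Upsilon_j}\mb Z_{j-1}\|_F^2 \le \|\mb Z_{j-1}\|_F^2 + t_j$ with $t_j$ of lower order, so the per-layer inflation is negligible. This concentration step is the main new content of the proposition relative to \cite{Candes2011-JACM}, and your proposal is missing it.
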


The careful reader may notice that the relaxation parameters $(\alpha,\beta)$ in Proposition \ref{thm:clmw} are stricter than those provided by \cite{Candes2011-JACM}, which gives $\alpha = 1/ 4 \sqrt{m}$, $\beta = 1/2$. In fact, by modifying the constants in the construction of \cite{Candes2011-JACM}, we can achieve $\beta$ smaller than any desired constant, and $\alpha$ smaller than any polynomial in $m^{-1}$, at the expense of slightly more stringent (but qualitatively equivalent) demands on $(\mb L_0,\mb S_0)$. The bound \eqref{eqn:Wh-bound} is implied by the probabilistic lemmas in \cite{Candes2011-JACM}, but requires a bit of manipulation to obtain. Below, we will first prove Theorem \ref{thm:main}, and then sketch a proof of the modifications to \cite{Candes2011-JACM} needed to obtain the supporting result Proposition \ref{thm:clmw}. 

\begin{proof}[\bf \em Proof of Theorem \ref{thm:main}] 
From Lemma \ref{lem:sc-relaxed}, to show that $(\mb L_0,\mb S_0)$ is the unique optimal solution to the compressive PCP problem, it is enough to show that 
\begin{itemize}
\item[{\bf (I)}] $\| \PT \PO \| \le 1/2$. 

\item[{\bf (II)}] There exists an $(\alpha',1/2)$-inexact CPCP certificate $\cert{\mathrm{CPCP}}$ with 
$\alpha' \;<\; \frac{1 - \| \PQp\PTO \|^2}{4 \sqrt{m}}$. 
\end{itemize}
We accomplish this in three parts. In paragraph (i) below, we apply Lemma \ref{lem:op-appx} to lower bound $1 - \| \PQp \PTO \|^2$. In paragraph (ii), we use Proposition \ref{thm:clmw} to show (I) and the existence of an inexact PCP certificate $\cert{\mathrm{PCP}}$. In paragraph (iii) we use Theorem \ref{thm:upgrade} to upgrade this to an inexact CPCP certificate $\cert{\mathrm{CPCP}}$ that satisfies property (II). Paragraph (iv) completes the proof by showing that the probability of failure is appropriately small. 

\paragraph{(i) Bounding $1 - \| \PQp \PTO \|^2$.} We will apply Lemma \ref{lem:op-appx} with $S = T + \Omega$. The lemma requires $$\dim{Q} \;\ge\; C_1 \cdot  \dim{T+\Omega}.$$ The dimension of $T + \Omega$ is a random variable, which depends on the size of the support set $\Omega$. Let $\event_\Omega$ denote the event
\begin{equation}
\event_{\Omega} = \set{ | \Omega | \;\le\; 2 \rho mn + m }. 
\end{equation}
Notice that $|\Omega|$ is a sum of $mn$ $\bernoulli{\rho}$ random variables. By Bernstein's inequality, 
\begin{equation}
\prob{ |\Omega| \ge \rho m n + t } \;\le\; \exp\left( \frac{-t^2 / 2}{ \rho m n + t / 3 } \right).
\end{equation}
Setting $t = \rho m n + m$ and simplifying, we obtain
\begin{equation}
\prob{\event_\Omega^c} \le \exp\left( - \frac{3 m}{10} \right).
\end{equation}
On $\event_{\Omega}$, we have 
\begin{equation} \label{eqn:OT-dim-bound}
\dim{\Omega + T} \;<\; 2 \rho mn + m + 2 mr \;\le\; 3 \cdot ( \, \rho mn + m r\, ). 
\end{equation}
Comparing \eqref{eqn:OT-dim-bound} to the condition on $\dim{Q}$ in Theorem \ref{thm:main}, we can see that on $\event_{\Omega}$, the conditions of Lemma \ref{lem:op-appx} are satisfied. Now, let $S = T + \Omega$, set $B = \{ \mb X \in S \mid \| \mb X \|_F = 1 \}$ and notice that 
\begin{eqnarray}
1 - \| \PQp \PS \|^2 &=& \inf_{\mb X \in B} \;\; \< \mb X, \mb X \> - \< \PQp \PS \mb X, \PQp \PS \mb X \> \nonumber \\
&=& \inf_{\mb X \in B}\;\; \left\< \mb X, \left( \PS - \PS \PQp \PS \right) \mb X \right\> \nonumber \\
&=& \inf_{\mb X \in B} \;\; \left\< \mb X, \left( \frac{\dim{Q}}{mn} \PS + \PS \PQ \PS - \frac{\dim{Q}}{mn} \PS  \right) \mb X \right\> \nonumber \\
&\ge& \frac{\dim{Q}}{mn} - \sup_{\mb X \in B} \; \left\< \mb X, \left( \PS \PQ \PS - \frac{\dim{Q}}{mn} \PS  \right) \mb X \right\> \nonumber \\
&\ge& \frac{\dim{Q}}{mn} - \left\| \PS \PQ \PS - \frac{\dim{Q}}{mn} \PS \right\|.
\end{eqnarray}
Let $\event_Q$ be the event $\left\{ \| \PS \PQ \PS - \tfrac{\dim{Q}}{mn} \PS \| \le \tfrac{1}{16} \tfrac{\dim{Q}}{mn} \right\}$. Using Lemma \ref{lem:op-appx} and $\dim{S} = \dim{T \oplus \Omega} \ge m$, we have
\begin{equation}
\condprob{\event_Q}{\event_\Omega} \;\ge\; 1 - C_2 \exp( - c_1 m ).
\end{equation}
On $\event_Q$, 
\begin{equation*} 
1 - \| \PQp \PS \|^2 \;\ge\; \frac{15}{16} \frac{\dim{Q}}{mn}.
\end{equation*}
Since by assumption $\dim{Q} \ge C_Q \times \log^2 m \times \dim{T+\Omega} \ge C_Q \times \log^2 m \times m$ and $m \ge n$, ensuring that $C_Q > 16 / 15$, we can further conclude that 
\begin{equation} \label{eqn:pqppslb}
1 - \| \PQp \PS \|^2 \;\ge\; \frac{1}{m}.
\end{equation}

\paragraph{(ii) Inexact PCP Certificate.} By Theorem \ref{thm:clmw}, on an event $\event_{\mathrm{PCP}}$ of probability at least $1 - C_2 m^{-10}$, we have $\| \PT \PO \| < 1/2$, and there exists an $(m^{-2},1/4)$-inexact PCP certificate $\cert{\mathrm{PCP}}$ for $(\mb L_0,\mb S_0)$, with 
\begin{equation} \label{eqn:lambda-pcp-bound}
\| \cert{\mathrm{PCP}}\|_F \;\le\; C_3 \sqrt{\mathrm{rank}(\mb L_0)} + 2 \lambda \sqrt{ |\Omega| }.
\end{equation}
Moreover, $\event_{\mathrm{PCP}}$ is independent of $Q$. We rewrite the bound \eqref{eqn:lambda-pcp-bound} a bit for later use. We have 
\begin{eqnarray*}
m \, \| \cert{\mathrm{PCP}}\|_F^2 \;\le\; m  \left( \, 2 \, C_3^2 \, r + 4 \lambda^2 |\Omega| \, \right),
\end{eqnarray*}
which on $\event_\Omega$ gives 
\begin{equation} \label{eqn:cert-sqr-bound}
m \, \| \cert{\mathrm{PCP}}\|_F^2 \;\le\; C_4 \,( \,\rho mn + m r\, ),
\end{equation}
where $C_4$ is numerical. 

\paragraph{(iii) Upgrade to CPCP Certificate.} Now, condition on $\event_{\mathrm{PCP}}$ and $\event_\Omega$. By our assumption on $\dim{Q}$, the conditions of Theorem \ref{thm:upgrade} are satisfied. On an event $\event_{\text{upgrade}}$ of conditional probability at least $1 - C_5 m^{-9}$, the certificate $\cert{\mathrm{PCP}}$ can be refined to an $(\alpha',\beta')$-inexact CPCP certificate $\cert{\mathrm{CPCP}}$, with 
\begin{equation}
\alpha' \le m^{-2} + m^{-3} \| \cert{\mathrm{PCP}} \|_F,
\end{equation}
and 
\begin{equation*}
\beta' \;\le\; \frac{1}{4} + C_6 \left[ \frac{\| \mb \Lambda_{\mathrm{PCP}} \|_F^2 \log m}{\dim{Q}} \right]^{1/2}  \max\set{\expect{ \,\| \mb G \|\,}  + \sqrt{\log m}\;, \; \expect{ \, \| \mb G \|_\infty } \sqrt{m} + \sqrt{m \log m} },
\end{equation*}
where $\mb G$ is iid $\normal{0}{1}$. Furthermore, provided $mn > 1$, we have the bounds 
\begin{equation}
\E\| \mb G \| \;\le\; 2 \sqrt{m} \quad \text{and} \quad \E \| \mb G \|_\infty \;\le\; 3 \sqrt{2 \log m},
\end{equation}
and so 
\begin{eqnarray*}
\beta' &\le& \frac{1}{4} + C_7 \left[ \frac{ m \, \| \mb \Lambda_{\mathrm{PCP}} \|_F^2 \, \log^2 m }{\dim{Q}} \right]^{1/2} \quad\le\quad \frac{1}{4} + \left[ \frac{ C_8 ( \rho m n + mr ) \log^2 m }{\dim{Q}} \right]^{1/2},
\end{eqnarray*}
where we have used \eqref{eqn:cert-sqr-bound}. Ensuring that the constant $C_Q$ in the statement of the theorem is larger than $16 C_8$, we can conclude that $\beta' \le 1/2$.

Referring to property (II) above, all that is left to show is that $\alpha' < \frac{1-\| \PQp\PTO\|_F^2}{4 \sqrt{m}}$. Using paragraph (i), on $\event_\Omega \cap \event_Q$, it suffices to show $\alpha' < \frac{1}{4 m^{3/2}}$. Using \eqref{eqn:cert-sqr-bound}, ensuring that the constants $c_r,c_\rho$ in the statement of Theorem \ref{thm:main} are sufficiently small (say, each smaller than $1/2 C_4$), we may conclude that $\| \mb \Lambda_{\mathrm{PCP}} \| \le \sqrt{m}$. Hence, we have $\alpha' \le m^{-2} + m^{-5/2}$, which is strictly smaller than $\frac{1}{4 m^{3/2}}$ provided $m$ is sufficiently large. 

We have shown that on 
$$\event_{\text{good}} \doteq \event_\Omega \cap \event_{Q} \cap \event_{\text{PCP}} \cap \event_{\text{upgrade}},$$
(I)-(II) hold, and hence $(\mb L_0,\mb S_0)$ is the unique optimal solution to the CPCP problem. 

\paragraph{(iv) Probability.} We have 
\begin{eqnarray*}
\prob{ \event_{\text{good}}^c } &\le& \prob{ ( \event_Q \cap \event_\Omega )^c } + \prob{ (\event_{\text{upgrade}} \cap \event_{\mathrm{PCP}} \cap \event_\Omega)^c } \\
&=& 1 - \condprob{ \event_Q }{\event_{\Omega}} \prob{\event_{\Omega}} + 1 - \condprob{\event_{\text{upgrade}}}{ \event_{\mathrm{PCP}} \cap \event_\Omega} \prob{\event_{\mathrm{PCP}} \cap \event_{\Omega} } \\
&\le& 1 - \condprob{ \event_Q }{\event_{\Omega}} + \prob{\event_{\Omega}^c} + 1 - \condprob{\event_{\text{upgrade}}}{ \event_{\mathrm{PCP}} \cap \event_\Omega} + \prob{\event_{\mathrm{PCP}}^c } + \prob{ \event_{\Omega}^c } \\
&\le& C_2 \exp( - c_1 m )  + C_2 m^{-10} + C_5 m^{-9} + 2 \exp( - 3 m / 10 ),
\end{eqnarray*}
provided that $m$ is larger than some $m_0$. Consolidating bounds, we may conclude that correct recovery occurs with probability at least $1 - C_9 m^{-9}$, choosing $C_9$ such that the bound is nontrivial only for $m > m_0$. This completes the proof of Theorem \ref{thm:main}. 
\end{proof}

We close by sketching the proof of Proposition \ref{thm:clmw}:

\begin{proof}[\bf \em Proof of Proposition \ref{thm:clmw} (sketch)] Under the hypotheses, the bound $\| \PO \PT \| \le 1/2$ follows immediately from Corollary 2.7 of \cite{Candes2011-JACM}. \cite{Candes2011-JACM} constructs a certificate $\Wh$ in three parts as 
\begin{equation}
\Lpcp = \mb U \mb V^* + \mb W^L + \mb W^S.
\end{equation} 
The two terms $\mb W^L$, $\mb W^S$ will both be elements of $T^\perp$, so 
\begin{equation}
\|\PT\Lpcp - \mb U \mb V^* \|_F = 0.
\end{equation}
Moreover, the term $\mb W^S$ will satisfy $\PO \mb W^S = \lambda \, \mathrm{sign}(\mb S_0)$. So, we have 
\begin{equation}
\|\PO \Lpcp - \lambda \, \mathrm{sign}(\mb S_0) \|_F = \| \PO[ \mb U \mb V^* + \mb W^L ]\|_F.
\end{equation}
We can therefore take $\alpha = \|\PO[ \mb U \mb V^* + \mb W^L]\|_F$. To prove Proposition \ref{thm:clmw}, it is therefore enough to show that with high probability the following properties are satisfied:
\begin{itemize}
\item{\bf (I) Structure constraint:} $\| \PO [ \mb U \mb V^* + \mb W^L ] \|_F \le m^{-2}$. 

\item{\bf (II) Dual norm constraints:}  
\begin{equation} \label{eqn:WL-dual}
\| \PTp \mb W^L \| \le 1/8, \qquad \| \POc [ \mb U \mb V^* + \mb W^L ] \|_\infty \;\le\; \frac{\lambda}{8},
\end{equation}
and 
\begin{equation} \label{eqn:WS-dual}
\| \PTp \mb W^S \| \le 1/8, \qquad \| \PO^c \mb W^S \|_\infty \;\le\; \frac{\lambda}{8}.
\end{equation}

\item{\bf (III) Frobenius norm bounds:} We have $\| \Lpcp \|_F  \le \| \mb U \mb V^* \|_F + \| \mb W^L \|_F + \| \mb W^S \|_F$. The first term is simply $\sqrt{r}$. We will show that 
\begin{eqnarray}
\|\mb W^L\|_F &\le& 3 \sqrt{r}, \label{eqn:WL-fro} \\
\|\mb W^S\|_F &\le& \frac{4}{3} \lambda \sqrt{\| \mb S_0 \|_0}. \label{eqn:WS-fro}
\end{eqnarray}
\end{itemize}

In paragraph (i) below, we review the construction of $\mb W^L$ from \cite{Candes2011-JACM}, and show that with slight changes in the constants, (I) and \eqref{eqn:WL-dual} are satisfied with high probability. In paragraph (ii) we review the construction of $\mb W^S$ and show that with high probability \eqref{eqn:WS-dual} is satisfied. Together, this implies that $\Lpcp$ is an $(m^{-2},1/4)$-certificate for the PCP problem. Paragraph (ii) will also show \eqref{eqn:WS-fro}. Finally, in paragraph (iii), we show \eqref{eqn:WL-fro}. This step involves the most additional work. Taken together, this establishes the proposition.

\paragraph{(i) Constructing $\mb W^L$.} The term $\mb W^L$ is constructed to lie in $T^\perp$ and satisfy $$\PO[ \, \mb U \mb V^* + \mb W^L \, ]\approx \mb 0.$$ 
This is accomplished via a golfing argument that writes the complement $\Omega^c$ as a union of $j_0$ subsets $\Upsilon_1, \dots, \Upsilon_{j_0}$, with $\Upsilon_j \sim_{iid} \bernoulli{q}$.\footnote{In \cite{Candes2011-JACM}, $\Upsilon_j$ is denoted $\Omega_j$; we use the notation $\Upsilon$ to avoid confusion between the support $\Omega$ and the subsets $\Upsilon$ of the complement of the support.} The parameter $q$ is set so that $\rho = (1-q)^{j_0}$, which ensures that $\Omega$ is indeed $\bernoulli{\rho}$. Notice that with this setting, we have $q \ge (1-\rho)/j_0$. 

The certificate is generated inductively, starting with $\mb Y_0 = \mb 0$. The error at step $j$ is $$\mb Z_j = \PT \mb Y_j - \UVt$$ and the corrective update $$\mb Y_j = \mb Y_{j-1} - q^{-1} \mc P_{\Upsilon_j} \mb Z_{j-1}.$$
This leads to a recursive expression for the error $\mb Z_{j}$:
\begin{equation}
\mb Z_{j} = \PT ( \mc I - q^{-1} \mc P_{\Upsilon_j} ) \PT \mb Z_{j-1},
\end{equation}
which implies that the error decays quickly in both $\ell^\infty$ and Frobenius norm:
\begin{eqnarray*} 
\| \mb Z_j \|_\infty &\le& \tfrac{1}{2}\| \mb Z_{j-1} \|_\infty \;\le\; 2^{-j} \| \mb Z_0 \|_\infty, \\
\| \mb Z_j \|_F &\le& \tfrac{1}{2} \| \mb Z_{j-1} \|_F \;\le\; 2^{-j} \| \mb Z_0 \|_F. 
\end{eqnarray*}
These inequalities are \cite{Candes2011-JACM} (3.4)-(3.6), with $\epsilon = 1/2$. Notice that \cite{Candes2011-JACM} (3.3) is indeed satisfied, provided $c_r > 0$ is sufficiently small. Using 
\begin{equation}
\| \mb Z_0 \|_\infty \le \sqrt{\frac{\mu r}{ m n}}, \quad\text{and}\quad \| \mb Z_0 \|_F = \|\mb U \mb V^* \|_F = \sqrt{r},
\end{equation}
 we obtain
\begin{eqnarray} 
\sum_{j=0}^{j_0} \| \mb Z_j \|_\infty \;\le\; 2 \sqrt{\frac{\mu r}{mn}}, \quad \text{and} \quad \sum_{j=0}^{j_0} \| \mb Z_j \|_F \;\le\; 2 \sqrt{r}.                     \label{eqn:Z-sum-bounds}
\end{eqnarray}
From arguments of \cite{Candes2011-JACM}, these bounds hold simultaneously on an event $\event_{Z}$ of probability at least $1 - C_1 m^{-10}$. After $j_0$ steps, the component $\mb W^L$ is generated as
\begin{equation} \label{eqn:WL}
\mb W^L \;\doteq\; \PTp \mb Y_{j_0} \;=\; \PTp \sum_{j=1}^{j_0} q^{-1} \mc P_{\Upsilon_j} \mb Z_{j-1}.
\end{equation}
The proof of Lemma 2.8(b) of \cite{Candes2011-JACM} shows that 
\begin{equation}
\| \PO [ \mb U \mb V^* + \mb W^L ] \|_F \quad\le\quad \| \mb Z_{j_0}\|_F.   
\end{equation}
In \cite{Candes2011-JACM}, $j_0$ was chosen to ensure that $\| \mb Z_{j_0}\|_F \le 1 / 4 \sqrt{m}$. Here, we set $j_0 = \ceiling{3 \log_2 m}$, ensuring that $\| \mb Z_{j_0} \|_F \le 2^{-j_0} \sqrt{r} \le m^{-2}$. The arguments of \cite{Candes2011-JACM} establish the following: 
\begin{eqnarray}
\| \PTp \mb W^L \| &\le& 2 C_0' \sqrt{\frac{m\log m}{q}} \| \mb U \mb V^* \|_\infty, \\
\| \POc \mb W^L \|_\infty &\le& \| \mb Z_{j_0} \|_\infty + 2 q^{-1} \| \mb U \mb V^* \|_\infty.
\end{eqnarray}
where $C_0'$ is numerical. Moreover, we know that $q > \frac{c}{\log m}$, where $c > 0$ is numerical. Hence, we have 
\begin{eqnarray}
\| \PTp \mb W^L \| &\le& C \sqrt{\frac{ \mu r \log^2 m}{n} }, \\
\| \POc \mb W^L \|_\infty &\le& \frac{1}{\sqrt{m}} \left( \sqrt{\frac{\mu r}{n}} + \frac{2}{c} \sqrt{\frac{\mu r  \log^2 m}{n} } \right).
\end{eqnarray}
Recalling that by assumption $r \le c_r n / \mu \log^2 m$, we have the desired bounds \eqref{eqn:WL-dual}, provided the constant $c_r$ is sufficiently small. 

\paragraph{(ii) Constructing $\mb W^S$.} The term $\mb W^S$ is constructed to satisfy $\mb W^S \in T^\perp$, $\PO \mb W^S = \lambda \,\mathrm{sign}(\mb S_0)$. More precisely, \cite{Candes2011-JACM} set this term to be the minimum Frobenius norm solution to the system of equations 
\begin{equation}
\PT \mb W^S = \mb 0, \quad \PO \mb W^S = \lambda \, \mathrm{sign}(\mb S_0). 
\end{equation}
This solution is given by the Neumann series 
\begin{equation}
\mb W^S \;=\; \PTp \sum_{j=0}^\infty ( \PO \PT \PO )^j [ \lambda \, \mathrm{sign}(\mb S_0) ].
\end{equation}
So,
\begin{equation}
\| \mb W^S \|_F \;\le\; \frac{\| \lambda \, \mathrm{sign}(\mb S_0) \|_F}{1-\| \PO \PT \|^2} \;\le\; \tfrac{4}{3} \lambda \sqrt{\| \mb S_0 \|_0}.
\end{equation}
Similar to paragraph (i) above, one can quickly check that by ensuring $\rho$ is smaller than some fixed constant and using the same arguments as \cite{Candes2011-JACM}, \eqref{eqn:WS-dual} is satisfied with high probability. 

\paragraph{(iii) Bounding $\| \mb W^L \|_F$.} We use the fact that $\Upsilon_j$ and $\mb Z_{j-1}$ are independent random variables. By \eqref{eqn:WL}, it is enough to control the Frobenius norm  $q^{-1} \mc P_{\Upsilon_j} \mb Z_{j-1}$ for each $j$. Notice that
\begin{eqnarray*}
\|  q^{-1} \mc P_{\Upsilon_j} \mb Z_{j-1} \|_F^2 &=& \| \mb Z_{j-1} \|_F^2 + \sum_{kl} (q^{-1} \delta_{kl}-1) [ \mb Z_{j-1} ]_{kl}^2 \quad\doteq\quad \| \mb Z_{j-1}\|_F^2 + \sum_{kl} H_{kl},
\end{eqnarray*}
where $\delta_{kl}$ is an indicator for the event $(k,l) \in \Upsilon_j$. Then $\E[ H_{kl} ] = 0$, $|H_{kl}| \le q^{-1} \| \mb Z_{j-1} \|_\infty^2$ almost surely, and $\E [ H_{kl}^2 ] \le q^{-1} [\mb Z_{j-1}]_{kl}^4$. Summing, we have 
\begin{equation}
\sum_{kl} \E[ H_{kl}^2 ] \;\le\; q^{-1} \| \mb Z_{j-1} \|_\infty^2 \| \mb Z_{j-1} \|_F^2. 
\end{equation}
By Bernstein's inequality, 
\begin{eqnarray*}
\prob{ \| q^{-1} \mc P_{\Upsilon_j} \mb Z_{j-1} \|_F^2 \;>\; \| \mb Z_{j-1} \|_F^2 + t } &\le& \exp\left( - \frac{t^2}{2 q^{-1} \| \mb Z_{j-1} \|_\infty^2  \left( \| \mb Z_{j-1} \|_F^2 + \tfrac{t}{3} \right)} \right).
\end{eqnarray*}
By setting
\begin{equation}
t_j = C_2 \max\set{ \| \mb Z_{j-1} \|_{\infty}^2 q^{-1} \log m\,, \;\| \mb Z_{j-1} \|_{\infty} \|\mb Z_{j-1} \|_F \sqrt{q^{-1}\log m} },
\end{equation}
with appropriate numerical constant $C_2$, we can ensure that for each $j$, 
\begin{equation}
\prob{ \| q^{-1} \mc P_{\Upsilon_j} \mb Z_{j-1} \|_F^2 \;>\; \| \mb Z_{j-1} \|_F^2 + t_j } \;\le\; m^{-11}.
\end{equation}
Since we have $q > \frac{c}{\log m}$ for some positive numerical constant $c$, using $\sqrt{s+t} \le \sqrt{s}+\sqrt{t}$, on an event with overall probability at least $1-j_0 m^{-11}$, 
\begin{eqnarray*}
\| \mb W^L \|_F &\le& \sum_{j=1}^{j_0} \| q^{-1} \mc P_{\Upsilon_j} \mb Z_{j-1} \|_F \\
                &\le& \sum_{j=1}^{j_0} \| \mb Z_{j-1} \|_F + \sqrt{t_j} \\
                &\le& \sum_{j=1}^{j_0} \| \mb Z_{j-1} \|_F + C_3  \| \mb Z_{j-1}\|_\infty \log m + C_4 \sqrt{\| \mb Z_{j-1} \|_\infty \| \mb Z_{j-1} \|_F \log m} \\
                &\le& 2 \sqrt{r} + 2 C_3 \log(m) \sqrt{\frac{\mu r}{mn}} + C_4 \sqrt{\log m} \times \sum_{j=1}^{j_0} 2^{-j} \| \mb Z_{0} \|_\infty^{1/2} \| \mb Z_{0} \|_F^{1/2} \\
                &\le& 2 \sqrt{r} + 2 C_3 \sqrt{\frac{1}{m}\frac{\mu r \log^2 m}{n}} + 2 C_4 \sqrt[4]{\frac{r}{m} \frac{\mu r \log^2 m}{n}}.
\end{eqnarray*}
Recalling again the assumption $r \le c_r n / \mu \log^2 m$, and ensuring that $c_r$ is sufficiently small, the final two terms above are bounded by constants. In particular, we can conclude that $\| \mb W^L \|_F \le 3 \sqrt{r}$. This completes the proof. 
\end{proof}

\section*{Acknowledgements} JW thanks the Rice group (Andrew Waters, Aswin Sankaranarayanan and Richard Baraniuk) for discussions and clarifications related to this work and \cite{Waters2011-NIPS}. He would also like to thank Xiaodong Li of Stanford for discussions related to this work. 

\bibliographystyle{alpha}
\bibliography{gpcp}

\appendix

\section{Proof of Lemma \ref{lem:optimality}: Optimality Conditions} \label{app:gen-duality}

\begin{proof} 
Let $f$ denote the objective function. Consider a feasible perturbation $\mb \delta = (\mb \Delta_{1}, \dots, \mb \Delta_\tau)$, so $\PQ\sum_i \mb \Delta_i = \mb 0$. Then for any $\mb W_1, \dots, \mb W_\tau$ such that $\forall \,i, \;\, \mb W_i \in \partial \| \cdot \|_{(i)}(\mb X_{i,\star})$, we have 
\begin{equation} \label{eqn:subgrad-bound}
f(\mb x_\star + \mb \delta) \;\ge\; f(\mb x_\star) + \sum_i \lambda_i \< \mb W_i, \mb \Delta_i \>.
\end{equation}
By duality of norms, for each $i$ there exists $\mb H_i \in \Re^{m \times n}$ with $\| \mb H_i \|_{(i)}^* \le 1$ and 
\begin{equation}
\<\mb H_i , \PTip \mb \Delta_i \> \;=\; \| \PTip \mb \Delta_i \|_{(i)}.
\end{equation}
Set $\mb W_i = \mb S_i + \PTip \mb H_i$. From our definition of a decomposable norm, $\PTip$ is nonexpansive, and so $\| \PTip \mb H_i \|_{(i)}^* \le 1$, and $\mb W_i \in \partial \| \cdot \|_{(i)}( \mb X_{i,\star} )$. Moreover, 
\begin{eqnarray}
\< \mb W_i, \mb \Delta_i \> &=& \< \PTi \mb W_i,\mb \Delta_i \> + \< \PTip \mb W_i, \mb \Delta_i \> \nonumber \\
&=& \< \PTi \mb W_i , \PTi \mb \Delta_i \> + \< \PTip \mb W_i, \PTip \mb \Delta_i \> \nonumber \\
&=& \< \mb S_i, \PTi \mb \Delta_i \> + \< \mb H_i, \PTip \mb \Delta_i \> \nonumber \\
&=& \< \mb S_i, \PTi \mb \Delta_i \> + \| \PTip \mb \Delta_i \|_{(i)}
\end{eqnarray}
Plugging in to \eqref{eqn:subgrad-bound}, we have
\begin{eqnarray}
f(\mb x_\star + \mb \delta) &\ge& f(\mb x_\star) + \sum_i \< \lambda_i \mb S_i, \mc P_{T_i} \mb \Delta_i \> + \lambda_i \| \mc P_{T_i^\perp} \mb \Delta_i \|_{(i)} \nonumber \\
&=& f(\mb x_\star) + \sum_i \< \PTi \mb \Lambda, \mc P_{T_i} \mb \Delta_i \> + \lambda_i \| \mc P_{T_i^\perp} \mb \Delta_i \|_{(i)} \nonumber \\
&=& f(\mb x_\star) + \sum_i \< \mb \Lambda, \mc P_{T_i} \mb \Delta_i \> + \lambda_i \| \mc P_{T_i^\perp} \mb \Delta_i \|_{(i)} \nonumber \\
&=& f(\mb x_\star) + \sum_i \< \mb \Lambda, \mb \Delta_i \> - \< \mb \Lambda, \mc P_{T^\perp_i} \mb \Delta_i \> + \lambda_i \| \mc P_{T_i^\perp} \mb \Delta_i \|_{(i)} \nonumber \\
&\ge& f(\mb x_\star) + \Bigl\< \mb \Lambda, \sum_j \mb \Delta_j \Bigr\> + \sum_i - \| \mc P_{T^\perp_i} \mb \Lambda \|_{(i)}^* \| \mc P_{T^\perp_i} \mb \Delta_i \|_{(i)} + \lambda_i \| \mc P_{T_i^\perp} \mb \Delta_i \|_{(i)} \nonumber \\
&=& f(\mb x_\star) + \Bigl\< \PQp \mb \Lambda, \sum_j \mb \Delta_j \Bigr\> + \sum_i \left(\lambda_i - \| \mc P_{T^\perp_i} \mb \Lambda \|_{(i)}^* \right) \| \mc P_{T_i^\perp} \mb \Delta_i \|_{(i)} \nonumber \\
&=& f(\mb x_\star) + \sum_i \left(\lambda_i - \| \mc P_{T^\perp_i} \mb \Lambda \|_{(i)}^* \right) \| \mc P_{T_i^\perp} \mb \Delta_i \|_{(i)},
\end{eqnarray}
where we have used that $\< \PQ \mb \Lambda, \sum_j \mb \Delta_j \> = \< \mb \Lambda, \PQ \sum_j \mb \Delta_j \> = 0$, since $\mb \delta$ is feasible. Since each of the $\| \mc P_{T^\perp_i} \mb \Lambda \|_{(i)}^*$ is strictly smaller than $\lambda_i$, if any of the $\mc P_{T_i^\perp} \mb \Delta_i$ are nonzero, then $f(\mb x_\star + \mb \delta) > f(\mb x_\star)$. If, on the other hand, all of the $\mc P_{T_i^\perp} \mb \Delta_i$ are zero, then $\mb \Delta_i \in T_i$ for all $i$, and the constraint $\PQ \sum_i \mb \Delta_i = \mb 0$ implies that $\sum_i \mb \Delta_i \in (T_1 + \dots + T_\tau) \cap Q^\perp$. If $\sum_i \mb \Delta_i \ne \mb 0$, this contradicts independence of $(T_1, \dots, T_\tau,Q^\perp)$. If $\sum_i \mb \Delta_i = \mb 0$, this contradicts independence of $T_1, \dots, T_\tau$ (which follows from independence of $(T_1, \dots, T_\tau, Q^\perp)$). So, we conclude that for any feasible perturbation $\mb \delta$, $f(\mb x_\star + \mb \delta)$ is strictly larger than $f(\mb  x_\star)$. 
\end{proof}

\section{Proof of Lemma \ref{lem:op-appx}: Operator Approximations} \label{app:op-appx}

\newcommand{\PVs}{\mb P_{\mathrm{vec}[S]}}

\begin{proof} Fix an $1/4$-net $\Gamma$ for the unit ball restricted to $S$. By \cite{Ledoux} Proposition 4.16, there exists such a net of size at most $\exp( \dim{S} \log 12 )$. Let $\mc H : \Re^\gamma \to \Re^{m \times n}$ via $\mc H \mb x = \sum_{i = 1}^\gamma \mb H_i x_i$, and let $\psi : \Re^\gamma \to \Re^{m \times n}$ via $\psi \mb x = \sum_{i=1}^\gamma \bar{\mb H}_i x_i$, where $(\bar{\mb H}_i)$ is an orthonormal sequence of matrices that span $R$. By the Bartlett decomposition, we may assume that $\left[ \vec{\bar{\mb H_1}} \mid \dots \mid \vec{\bar{\mb H_\gamma}} \right] \in \Re^{mn \times \gamma}$ is distributed according to the Haar measure on the Stiefel manifold of $mn \times \gamma$ matrices with orthonormal columns. Moreover, we have $\mc A = \mc H \mc H^*$ and $\mc P_R = \psi \psi^*$. 

A standard argument (see \cite{Vershynin2011} Lemma 5.4) gives that 
\begin{eqnarray}
\norm{ \PS \partop{} \PS - \PS }{} \quad= \sup_{\begin{array}{r}\mb X \in S \\ \| \mb X \|_F = 1\end{array}} \magnitude{ \frac{mn}{\gamma} \| \mc H^* \mb X \|_2^2  - 1 } \quad\le\quad 2 \sup_{\mb X \in \Gamma} \magnitude{ \frac{mn}{\gamma} \| \mc H^* \mb X \|_2^2  - 1 }.
\end{eqnarray}
Notice that $\sqrt{\frac{mn}{\gamma}} \mc H^* \mb X$ is distributed as an iid $\mc N(0,1/\gamma)$ random vector. Using Lemma 1 of \cite{Laurent2000-AOS}, 
\begin{equation}
\prob{ \magnitude{ \frac{mn}{\gamma} \| \mc H^* \mb X \|_2^2  - 1 } \ge 2 \sqrt{\frac{t}{\gamma}} + 2 \frac{t}{\gamma} } \;\le\; 2 \, e^{-t}. 
\end{equation}
Choose $t = c_1 \gamma$, with $c_1$ small enough that $4 \sqrt{c_1} + 4 c_1 \le 1/2$. Take a union bound over all $\exp(
 \dim{S} \log 12 )$ elements of $\Gamma$ to get 
 \begin{equation}
\prob{ \norm{ \PS \partop{} \PS - \PS }{} \ge \frac{1}{2} } \;\le\; 2 \exp\left(-c_1 \gamma + \dim{S} \log 12\right).
\end{equation}
Using the assumption that $\gamma > C_1 \dim{S}$, and ensuring that $C_1$ is large enough that $c_1 > \frac{\log 12}{C_1}$ completes the proof of \eqref{eqn:first}.

For the second term, we repeat the argument, noting that 
\begin{eqnarray}
\norm{ \frac{mn}{\gamma} \PS \mc P_R \PS - \PS }{} &\le& 2 \sup_{\mb X \in \Gamma} \magnitude{ \frac{mn}{\gamma} \| \psi^* \mb X \|_2^2  - 1 }.
\end{eqnarray}
Note that $\| \psi^* \mb X \|_F^2 = \| (\mathrm{vec} \circ \psi)^* \vec{\mb X} \|_2^2$. The operator $\mathrm{vec} \circ \psi : \Re^{\gamma} \to \Re^{mn}$ can be identified with an $mn \times \gamma$ matrix $\mb U$, which per the above discussion can be taken to be distributed according to the Haar measure. By orthogonal invariance, for any fixed $\mb x$, $\mb U^* \mb x$ is equal in distribution to the restriction of uniformly distributed random unit vector $\mb r \in \bb S^{mn-1}$ to its first $\gamma$ coordinates. Lemma 2.2 of \cite{Dasgupta2003-RSA} provides convenient tail bounds for the norm of such a coordinate restriction. Applying that lemma, we have that for every $t > 0$, there exists $c_t > 0$ such that 
\begin{equation}
\prob{ \magnitude{ \frac{mn}{\gamma} \| \psi^* \mb X \|_2^2  - 1 } > t } \;\le\; \exp\left( - c_t \gamma \right).
\end{equation}
Set $t = 1/32$. As above, ensuring that $C_1$ is larger than $\frac{\log 12}{c_t}$ and taking a union bound shows that with the desired probability $\norm{ \frac{mn}{\gamma} \PS \mc P_R \PS - \PS }{} \le 1/16$. Rescaling gives the bound quoted in the statement of the lemma. 
\end{proof}

\end{document}